\documentclass[sigconf, nonacm]{acmart}

\usepackage{float}
\usepackage{amsmath, amsthm}
\usepackage{stmaryrd}

\usepackage{amssymb}
\usepackage{bbm}
\usepackage{mathtools}
\usepackage{algorithm}
\usepackage{xfrac}
\usepackage{bm}
\usepackage{xcolor}
\usepackage{comment}
\usepackage{dsfont}
\usepackage{booktabs}
\usepackage{cprotect}
\usepackage{todonotes}
\usepackage{pdfpages}
\usepackage{pseudo}
\usepackage{tikz-cd}
\usepackage{adjustbox}

\definecolor{mygreen}{HTML}{508432}
\definecolor{myyellow}{HTML}{c25d0f}
\definecolor{myred}{HTML}{bf1716}

\usepackage[noend]{algorithmic}

\pseudoset{
  compact,
  label=\scriptsize\sffamily\color{gray}\arabic*,
  ctfont=\sffamily\color{gray},
  ct-left=\ctfont{\# },
  ct-right=,
  indent-length=3ex,
}

\usepackage{tcolorbox} 
\tcbuselibrary{skins, theorems}

\usepackage{aliascnt}
\usepackage[capitalise]{cleveref}

\def\xx{\mathbf{x}}
\def\xxi{\mathbf{\xi}}
\def\tt{\mathbf{t}}

\def\dd{\bm{\partial}}
\def\aalpha{{\bm\alpha}}
\def\bbeta{{\bm\beta}}

\def\ggamma{{\bm\gamma}}
\def\xxi{{\bm\xi}}
\def\zzeta{{\bm\zeta}}

\DeclareMathOperator{\lm}{lm}

\DeclareMathOperator{\ann}{ann}
\DeclareMathOperator{\lex}{lex}

\DeclareMathOperator{\cl}{Cl}

\DeclareMathOperator{\ini}{in}
\DeclareMathOperator{\grevlex}{grevlex}
\DeclareMathOperator{\sing}{Sing}
\DeclareMathOperator{\gens}{gens}
\DeclareMathOperator{\charv}{char}

\DeclareMathOperator{\Ai}{Ai}
\DeclareMathOperator{\J}{J}
\DeclareMathOperator{\I}{I}
\DeclareMathOperator{\K}{K}
\DeclareMathOperator{\Y}{Y}

\newcommand{\bK}{\mathbb K}
\newcommand{\bN}{\mathbb N}
\newcommand{\bQ}{\mathbb Q}

\newcommand{\bF}{\mathbb F}

\newcommand{\cF}{\mathcal F}

\newtcbtheorem{alg}{Algorithm}{label type=algorithm,pseudo/booktabs,before skip=20pt,  
    after skip=20pt}{algo}

\newenvironment{algovals}{\begin{list}{\scriptsize$\bullet$}{\setlength{\topsep}{0pt}\setlength{\parskip}{0pt}\setlength{\itemsep}{0pt}\setlength{\parsep}{0pt}}}{\end{list}}

\theoremstyle{plain}
\newtheorem{theorem}{Theorem}
\newaliascnt{proposition}{theorem}

\aliascntresetthe{proposition}
\newaliascnt{corollary}{theorem}
\newtheorem{corollary}[corollary]{Corollary}
\aliascntresetthe{corollary}
\newaliascnt{lemma}{theorem}
\newtheorem{lemma}[lemma]{Lemma}
\aliascntresetthe{lemma}
\newaliascnt{hypothesis}{theorem}

\aliascntresetthe{hypothesis}

\theoremstyle{definition}
\newaliascnt{definition}{theorem}
\newtheorem{definition}[definition]{Definition}
\aliascntresetthe{definition}
\newaliascnt{characterization}{theorem}

\aliascntresetthe{characterization}
\newaliascnt{notation}{theorem}

\aliascntresetthe{notation}

\theoremstyle{remark}
\newaliascnt{remark}{theorem}
\newtheorem{remark}[remark]{Remark}
\aliascntresetthe{remark}
\newaliascnt{example}{theorem}
\newtheorem{example}[example]{Example}
\aliascntresetthe{example}

\AtBeginDocument{%
  \crefname{theorem}{Theorem}{Theorems}%
  \crefname{proposition}{Proposition}{Propositions}%
  \crefname{corollary}{Corollary}{Corollaries}%
  \crefname{lemma}{Lemma}{Lemmas}%
  \crefname{hypothesis}{Hypothesis}{Hypotheses}%
  \crefname{definition}{Definition}{Definitions}%
  \crefname{characterization}{Characterization}{Characterizations}%
  \crefname{notation}{Notation}{Notations}%
  \crefname{remark}{Remark}{Remarks}%
  \crefname{example}{Example}{Examples}%
}

\hypersetup{
  plainpages=false,
  colorlinks=true,
  allcolors=blue!80!black,
  hypertexnames=false,
  unicode=true
}

\begin{document}

\title{Computing a holonomic submodule of the partial Weyl closure}

\author{Hadrien Brochet}
\authornote{This work was carried out while the author was at Inria, France, and was supported by the European Research 
Council under the European Union's Horizon Europe research and innovation programme, 
grant agreement 101040794 (10000 DIGITS).}
\affiliation{
  \institution{Max-Planck-Institut für Physik}
  \country{Germany}
}
\email{hbrochet@mpp.mpg.de}

\begin{abstract}
\noindent
The Weyl closure is a basic operation in algebraic analysis: it converts a system of differential operators with rational coefficients into an equivalent system with polynomial coefficients. 
In addition to encoding finer information on the singularities of the system, it serves as a preparatory step for many algorithms in symbolic integration.
A new algorithm is introduced to compute a holonomic submodule of the partial Weyl closure of a finite-rank module, where the closure is taken with respect to a subset of the variables.
The method relies on a non-commutative analogue of Rabinowitsch's trick.
The algorithm is implemented in the \textsc{Julia} package \textsc{MultivariateCreativeTelescoping.jl}
and shows substantial speedups over existing exact Weyl closure algorithms in \textsc{Singular} and \textsc{Macaulay2}.

\end{abstract}

\maketitle

\section{Introduction}

Algebraic analysis is the area of mathematics that studies systems of linear partial differential equations (PDEs)
with polynomial coefficients from an algebraic point of view. 
In this context, such systems are represented by a left ideal in the $n$th {Weyl algebra}
\begin{equation}
W_\xx = K[x_1,\dots,x_n]\langle \partial_1,\dots,\partial_n\rangle.
\end{equation} 
One important application of algebraic analysis is the study of parametric integrals through the construction of systems of PDEs that they satisfy.
This has been shown to be always possible when the integrand $f$ is {holonomic}~\cite{kashiwaraHolonomicSystemsLinear1978},
meaning that it satisfies a “maximally overdetermined" system of PDEs.
Algorithms for performing this computation exist~\cite{takayamaAlgorithmConstructingIntegral1990,oakuAlgorithmsDmodulesRestriction2001, brochetFasterMultivariateIntegration2025}.
However, these algorithms require as input generators of the annihilator of the integrand in $W_\xx$:
 \begin{equation}
     \ann_{W_\xx}(f) =\{ P\in W_\xx \mid P\cdot f = 0 \}. 
    \end{equation}
A common approach to compute $\ann_{W_\xx}(f)$ is to first compute 
 the annihilator $\ann_{W_\xx(\xx)}(f)$ of $f$ in the rational Weyl algebra
 \begin{equation}
 W_\xx(\xx) = K(x_1,\dots,x_n)\langle \partial_i,\dots,\partial_n\rangle
 \end{equation}
and then recover $\ann_{W_\xx}(f)$ via the intersection
\begin{equation} 
    \ann_{W_\xx}(f) = \ann_{W_\xx(\xx)}(f) \cap W_\xx. 
\end{equation}
This intersection computation is a particular instance of the {Weyl closure} defined as follows. 
Let $S \subset W_\xx^r$ be a left submodule for some integer $r$.  
The {Weyl closure} of $S$, denoted by $\cl(S)$, is the left $W_\xx$-module
\begin{equation}
    \cl(S) = W_\xx(\xx) S \,\cap\, W_\xx^r .
\end{equation}
Although an algorithm for computing Weyl closures exists~\cite{tsaiAlgorithmsAlgebraicAnalysis2000}, 
its practical complexity is often a bottleneck in symbolic integration, 
motivating the search for efficient approximation approaches.

A recent integration algorithm~\cite{brochetFasterMultivariateIntegration2025} 
assumes that the parameters $\tt = (t_1,\dots,t_m)$ of the integrals are rational variables,  while the  
integration variables $\xx$ are polynomial variables, and takes as input generators of 
$\ann_{W_{t,\xx}(\tt)}(f)$ where 
$W_{t,\xx}(\tt)=\bK(\tt)[\xx]\langle \partial_\tt,\partial_\xx\rangle$.
This motivates extending the definition of Weyl closure to partial Weyl closure.
Let $S\subseteq W_{\tt,\xx}(\tt)^r$ be a left submodule.
The partial Weyl closure of $S$ with respect to $\xx$ is
	  \begin{equation} 
	    \cl_\xx(S) = W_{\tt,\xx}(\tt,\xx)S\cap W_{\tt,\xx}(\tt)^r,
	 \end{equation}
where $W_{\tt,\xx}(\tt,\xx) = \bK(\tt,\xx)\langle \partial_\tt, \partial_\xx \rangle$.

 \paragraph{Related work}
 The problem of computing generators of $\cl(S)$ given $S\subset W_\xx^r$ was first introduced by Chyzak and Salvy~\cite{chyzakNoncommutativeEliminationOre1998} 
for the case $r=1$, under the name of extension/contraction problem. 
This problem was later solved by Tsai for general $r$ in his PhD thesis~\cite{tsaiWeylClosureLinear2000, tsaiAlgorithmsAlgebraicAnalysis2000}. 
Tsai showed~\cite[Theorem 2.2.1]{tsaiAlgorithmsAlgebraicAnalysis2000} that the Weyl closure of $S\subset W_\xx^r$ can be computed by inverting a single polynomial $p$ vanishing on the singular locus of $S$:
 \begin{equation} 
    \cl(S) = W_\xx[1/p]S \cap W_\xx^r. 
\end{equation} 
This yields an algorithm when combined with the localization algorithm of Oaku, Takayama, and Walther~\cite{oakuLocalizationAlgorithmDmodules2000} 
and Oaku's algorithm for computing the singular locus of $S$~\cite{oakuComputationCharacteristicVariety1994}.

The Weyl closure is also related to desingularization.
For $n=1$ the singularities of an operator $L\in W_x$ are the roots of its leading coefficient, 
and a desingularization of $L$ at a singularity $\alpha$ is an operator $L'\in W_x$ for which there exist $M\in W_x(x)$ such that $L' = ML$ and $L'$ does not have a singularity at $\alpha$. 
This notion was extended to systems of PDEs by Chen, Kauers, Li and Zhang~\cite{chenApparentSingularitiesDfinite2019}. 
The Weyl closure corresponds to the smallest ideal containing all desingularizations of the initial system.
Algorithms for the desingularization of Ore and differential operators in the univariate case have been studied extensively; see, for instance,~\cite{barkatouRemovingApparentSingularities2015, chenDesingularizationOreOperators2016, inceOrdinaryDifferentialEquations1926, zhangContractionOreIdeals2016}.

The computation of annihilators of holonomic functions in the rational Weyl algebra $W_\xx(\xx)$ is already well understood. 
Systems of differential equations are known for basic functions and algorithmic closure by sum, product and more~\cite{comtetCalculPratiqueCoefficients1964, takayamaApproachZeroRecognition1992} 
exist for deducing systems for more complicated ones. 
These algorithms are implemented in \textsc{Maple}~\cite{chyzakMgfunMaplePackage}, \textsc{Mathematica}~\cite{koutschanHolonomicFunctionsMathematica2014}, and \textsc{Sage}~\cite{kauersOrePolynomialsSage2015}.
Although similar closure properties for sum and product exist in the Weyl algebra~\cite{oakuAlgorithmsDmodulesRestriction2001} (see also~\cite[Sec.3]{oakuAlgorithmsIntegralsHolonomic2013}), 
their computational efficiency seems unclear. For example, the closure by product requires doubling the number of variables and eliminate half of them by Gröbner basis computation, which is known to be costly.

\paragraph{Contributions}
The notion of holonomicity is extended to $W_{\tt,\xx}(\tt)$-modules
and a new algorithm is proposed for computing a holonomic submodule of the partial Weyl closure. 
That is, given a finite rank module $S\subset W_{\tt,\xx}(\tt)^r$, the algorithm computes a module $S'$ such that $S'\subseteq\cl(S)$ 
and $W_{\tt,\xx}(\tt)^r/S'$ is holonomic.
The algorithm proceeds by iteratively constructing a sequence of modules 
\begin{equation} 
    S_1 \subseteq S_2 \subseteq \cdots \subseteq \cl(S) 
\end{equation} 
with the property that there exists $\ell\in\bN$ such that $S_\ell = \cl(S)$. 
The condition $S_i = \cl(S)$ could, in principle, be verified using an extension of the $b$-function associated with a weight vector~\cite[Chapter 5.1]{saitoGrobnerDeformationsHypergeometric2000}.
However, computing this function is known to be expensive, so a weaker condition is used instead, yielding an approximation algorithm. 
One possible choice is to stop at the smallest $i\in\bN$ such that $W_{\tt,\xx}(\tt)^r/S_i$ is holonomic. 
Note that, in practice, obtaining only a holonomic approximation does not seem to be an issue for symbolic integration;
closure algorithms under sums and products, both in the polynomial and rational Weyl algebra, are already not guaranteed to return the full annihilator.

The algorithm is based on a generalization of Rabinowitsch's trick to D-modules, 
which consists in introducing a new variable $T$ thought of as $1/p$ for some $p\in\bK(\tt)[\xx]$,
together with the additional equation $pT - 1 = 0$.
The commutation rules between $T$ and $\partial_\xx,\partial_\tt$ are not well suited for computing Gröbner basis 
in $W_{\xx,\tt}(\tt)[T]^r$ for an order eliminating $T$. 
This problem is tackled by computing Gröbner bases in $W_{\tt,\xx}(\tt)[T]^r$ seen as an infinite-rank $W_{\tt,\xx}(\tt)$-module~\cite{takayamaAlgorithmConstructingIntegral1990}. 
Then, saturations of $S$ with respect to $p$ are computed iteratively until the stopping condition is reached.
Lastly, the algorithm has been implemented in the \textsc{Julia} package MultivariateCreativeTelescoping.jl~\cite{brochetJuliaPackageMultivariateCreativeTelescoping2025}. 
It compares very favorably to existing implementations.

Compared with Tsai's approach, the present algorithm avoids several computationally expensive steps.
Tsai's algorithm first computes generators of the $W_\xx$-module $W_\xx[1/p]S$ using the localization algorithm
of Oaku, Takayama, and Walther~\cite{oakuLocalizationAlgorithmDmodules2000}, and then recovers the Weyl closure
by computing the intersection $W_\xx[1/p]S \cap W_\xx^r$.
The localization step requires the computation of a $b$-function associated with a weight vector of a suitable ideal,
which is known to be difficult in general.
Moreover, simplifications with respect to the polynomial $p$ are postponed until the final intersection step.
In particular, when $S$ is an ideal annihilating a function $f$, the localization algorithm returns an ideal $S'$
together with an integer $\ell$ such that $S'$ annihilates $f/p^\ell$, which typically leads to a significant blow-up
in the size of expressions compared to $S$ and $\cl(S)$.
This issue is avoided in the present algorithm by performing simplifications with respect to $p$ as early as possible
through iterative saturation.

\section{Preliminaries}
\subsection{Weyl algebras and holonomicity}
\paragraph{Polynomial and rational Weyl algebra} 
Let $\bK$ be a field of characteristic zero. Let $W_{\xx}$ denote the {$n$th Weyl algebra}
$\bK[\xx]\langle \dd_\xx \rangle$ with generators $\xx=(x_1,\dots,x_n)$ and
$\dd_\xx=(\partial_1,\dots,\partial_n)$, subject to the relations
\[
\partial_i x_i = x_i \partial_i + 1,
\quad
x_i x_j = x_j x_i,
\quad
\partial_i \partial_j = \partial_j \partial_i,
\quad
x_i \partial_j = \partial_j x_i \quad (i \neq j).
\]
The {rational Weyl algebra} $W_\xx(\xx)=\bK(\xx)\langle \dd_\xx \rangle$ is defined similarly,
with the commutation rules
\[
\partial_i \partial_j = \partial_j \partial_i,
\quad
\partial_i r(\xx)
=
r(\xx)\partial_i + \frac{\partial r(\xx)}{\partial x_i},
\]
for all $r(\xx)\in\bK(\xx)$ and $i\in\{1,\dots,n\}$.
More generally, we consider the mixed Weyl algebra $W_{\tt,\xx}(\tt)$,
where $\tt$ are rational variables and $\xx$ are polynomial variables.

Any element $P \in W_{\tt,\xx}(\tt)$ admits a unique decomposition
\[
P = \sum_{\aalpha,\bbeta,\ggamma} a_{\aalpha,\bbeta,\ggamma}(\tt)\,\xx^\aalpha \dd_\xx^\bbeta \dd_\tt^\ggamma,
\]
where the sum is finite and
$a_{\aalpha,\bbeta,\ggamma}(\tt)\in\bK(\tt)$.

The degree of $P$ is defined by
\[
\deg(P)=\max\{|\aalpha| + |\bbeta| + |\ggamma|\mid a_{\aalpha,\bbeta,\ggamma}(\tt)\neq 0\}.
\]
This definition extends naturally to free left modules.
Throughout this article, all modules are left modules, and the adjective ``left'' is omitted when no confusion may arise.
A submodule $S \subset W_{\tt,\xx}(\tt)^r$ is said to be of {finite rank}
if the vector space
\[
W_{\tt,\xx}(\tt,\xx)^r \big/ \big( W_{\tt,\xx}(\tt)\,S \big)
\]
is finite-dimensional over $\bK(\tt)$.
A module is finite-rank if and only if its Weyl closure is holonomic
\cite{takayamaApproachZeroRecognition1992}.

\paragraph{Holonomic $W_\xx$-modules.}
Let $S$ be a submodule of $W_\xx^r$, and set $M = W_\xx^r / S$.
The {Bernstein filtration}~\cite{bernsteinModulesRingDifferential1971}
of the algebra $W_\xx$ is the sequence of $\bK$-vector spaces $(\cF_m)_{m\geq 0}$
defined by
\[
\cF_m = \{ P \in W_\xx \mid \deg(P) \leq m \}.
\]
A filtration of $M$ adapted to $(\cF_m)_{m\geq 0}$ is given by the sequence
of $\bK$-linear subspaces $\Phi_m \subseteq M$ defined by
\[
\Phi_m
=
\{ P\cdot e_i + S \mid P \in \cF_m,\ 1 \leq i \leq r \},
\]
where $P\cdot e_i + S$ denotes the image of $P\cdot e_i$ in the quotient
$W_\xx^r/S$.
These filtrations are compatible with the algebra and module structures:
$\cF_m \cF_{m'} \subseteq \cF_{m+m'}$ and
$\cF_m \Phi_{m'} \subseteq \Phi_{m+m'}$.

There exists a unique polynomial $p \in \bK[m]$, called the Hilbert polynomial of $M$,
such that $\dim_\bK(\Phi_k) = p(k)$ for all sufficiently large $k$.
The {dimension} of $M$ is defined as the degree $d$ of $p$.
The integer~$d$ clearly lies between $0$ and~$2n$. Bernstein proved that if
$M$ is non-zero, then $d \geq n$
\cite{bernsteinModulesRingDifferential1971}.
The module $M$ is said to be {holonomic} if $d = n$ or if $M = 0$.

The holonomicity of $M$ can be tested algorithmically from a Gröbner basis of $S$ by means of \cref{cor:crit}.
Although the precise origin of this criterion is unclear, one implication appears in Zeilberger~\cite{zeilbergerHolonomicSystemsApproach1990},
and the second condition was used as the definition of holonomicity by Kauers~\cite{kauersDFiniteFunctions2023}.
A complete proof is given in~\cite[Theorem 80]{brochetEfficientAlgorithmsCreative2025}.

\begin{theorem}[Folklore]\label{cor:crit}
Let $S$ be a non-trivial submodule of $W_\xx^r$, let $G$ be a Gröbner basis of $S$,
and let $e_1,\dots,e_r$ be the canonical basis of $W_\xx^r$.
The following statements are equivalent:
\begin{enumerate}
\item The module $W_\xx^r/S$ is holonomic.
\item For every subset $A \subseteq \{\xx,\dd\}$ with $|A| = n+1$ and every $i \in \llbracket 1,r \rrbracket$,
the vector space $S \cap \bK\langle A\rangle e_i$ is non-zero.
\item For every subset $A \subseteq \{\xx,\dd\}$ with $|A| = n+1$ and every $i \in \llbracket 1,r \rrbracket$,
there exists $g \in G$ such that $\lm(g) \in \bK\langle A\rangle e_i$.
\end{enumerate}
\end{theorem}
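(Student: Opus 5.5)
The plan is to prove (1)$\Leftrightarrow$(2) by a dimension count against the Bernstein filtration, and (1)$\Leftrightarrow$(3) by passing to the initial module and using the combinatorics of monomial modules. For (3) I assume, as is implicit in the statement, that $G$ is a Gröbner basis for a term order compatible with total degree, i.e.\ refining $\deg$. For $A\subseteq\{\xx,\dd\}$, $\bK\langle A\rangle$ denotes the subalgebra generated by $A$. By PBW, the normally ordered monomials in the elements of $A$ form a $\bK$-basis of $\bK\langle A\rangle$, and this holds even when $A$ contains both $x_j$ and $\partial_j$. Hence $\dim_\bK(\bK\langle A\rangle\cap\cF_m)=\binom{m+|A|}{|A|}$, a polynomial in $m$ of degree $|A|$.

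\emph{(1)$\Leftrightarrow$(2).} Fix $A$ with $|A|=n+1$ and an index $i$. Consider the $\bK$-linear map $\bK\langle A\rangle\cap\cF_m\to\Phi_m$ given by $P\mapsto Pe_i+S$. Its kernel is $S\cap\bK\langle A\rangle e_i\cap\cF_m e_i$.

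Suppose (1) holds. Since $S$ is non-trivial, $M\neq 0$ and $\dim\Phi_m=p(m)$ with $\deg p=n$. The source has dimension of order $m^{n+1}$, so for $m$ large the map has a non-zero kernel, which gives (2).

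Conversely, suppose (1) fails, so $d\geq n+1$. Then $\mathrm{gr}\,M$ has Krull dimension at least $n+1$. I then want to produce a set $A$ of $n+1$ variables and an index $i$ with $S\cap\bK\langle A\rangle e_i=0$. This direction does not follow directly from the counting argument, so I route it through (3). By the monomial argument below, failure of (1) yields $A$ and $i$ such that no leading monomial of $S$ lies in $\bK\langle A\rangle e_i$. Any non-zero $g\in S\cap\bK\langle A\rangle e_i$ would have $\lm(g)\in\bK\langle A\rangle e_i$: rewriting $g$ in normal order only produces monomials in the variables of $A$, and reordering an element of $\bK\langle A\rangle$ does not introduce new variables. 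This is a contradiction. The same remark gives (3)$\Rightarrow$(2) directly, taking $g\in G\subseteq S$ only if $g$ itself lies in $\bK\langle A\rangle e_i$. It is therefore cleaner to organize the proof as (1)$\Rightarrow$(2)$\Rightarrow$(3)$\Rightarrow$(1).

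\emph{(2)$\Rightarrow$(3).} Take a non-zero $g\in S\cap\bK\langle A\rangle e_i$. Since $G$ is a Gröbner basis, $\lm(g)$ is divisible by $\lm(g')$ for some $g'\in G$ in component $i$. The monomial $\lm(g)$ involves only variables from $A$, and so does any divisor of it. Hence $\lm(g')\in\bK\langle A\rangle e_i$.

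\emph{(3)$\Rightarrow$(1).} Because the order refines the degree, standard monomials give a basis of $\Phi_m$. Therefore $\dim\Phi_m$ equals the Hilbert function of the commutative monomial module $\bK[\xx,\xxi]^r/\ini(S)$, where the initial module is generated by the $\lm(g)$, $g\in G$. This module is $\bigoplus_i\bK[\xx,\xxi]/J_i$ with $J_i$ monomial ideals. The dimension of $\bK[\xx,\xxi]/J_i$ is the largest size of a set $B$ of variables such that no monomial of $J_i$ is supported in $B$. Condition (3) says exactly that every $(n+1)$-set $A$ supports some monomial of $J_i$. So every component has dimension at most $n$, hence $d\le n$, and by Bernstein's inequality $d=n$.

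The main obstacle is the identification $\dim\Phi_m=\mathrm{HF}_{\ini(S)}(m)$. It requires checking that the normal-form reduction respects the Bernstein filtration in the non-commutative setting, which is where degree-compatibility of the order is essential. I would cite the standard Gröbner-deformation fact for this (as in Saito--Sturmfels--Takayama) rather than reprove it.
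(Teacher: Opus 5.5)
The paper does not prove this statement itself; it defers to Theorem~80 of the author's thesis, so there is no in-paper argument to compare against. Your cycle is the standard one, and its pieces are correct:
\begin{itemize}
\item (1)$\Rightarrow$(2): you count $\dim(\bK\langle A\rangle\cap\cF_m)=\binom{m+n+1}{n+1}$ against $\dim\Phi_m=O(m^n)$.
\item (2)$\Rightarrow$(3): divisibility of leading monomials, which works for every term order.
\item (3)$\Rightarrow$(1): correct when the order refines total degree.
\item (2)$\Rightarrow$(1): obtaining it through a degree-compatible basis is legitimate, since (2) does not involve $G$.
\end{itemize}
The gap is the extra hypothesis. Nothing in the statement restricts the order defining $G$, and elimination-type orders appear throughout the paper, so degree-compatibility is not ``implicit''. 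For other term orders your key identification is not merely unproved but false. Take $S=W_1(x-\partial^3)$ with lex order and $x>\partial$. Then $\lm(x-\partial^3)=x$, and the standard monomials are the $\partial^k$. The images of $1,\partial,x$ in $M$ are $1,\partial,\partial^3$, so $\dim_\bK\Phi_1=3$, while only two standard monomials have degree $\le 1$. In general $\dim\Phi_m=3m$ against $m+1$ for $m\geq 1$. Reduction modulo $G$ can raise total degree, so standard monomials of degree $\le m$ need not span $\Phi_m$. Your (3)$\Rightarrow$(1) is therefore established only for degree-compatible orders.

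To cover an arbitrary term order $<$, replace the Bernstein filtration by a weight filtration adapted to $G$.
\begin{itemize}
\item \emph{Choice of weight.} Since $G$ is finite and a term order agrees with some positive integer weight on any finite set of comparisons, choose $w\in\mathbb{Z}_{>0}^{2n}$ (with suitable shifts on the $e_j$ for position-first module orders) so that in each $g\in G$ the monomial $\lm(g)$ has strictly larger $w$-degree than every other monomial of $g$.
\item \emph{Refined order.} Let $<_w$ compare $w$-degree first and break ties by $<$; this is again a term order.
\item \emph{$G$ is still a Gr\"obner basis for $<_w$.} The remainder of any $h\in S$ on division by $G$ for $<_w$ lies in $S$ and has no monomial in $\langle\lm(G)\rangle$. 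Since $\langle\lm(G)\rangle$ is the $<$-initial module of $S$, that remainder is $0$.
\item \emph{Counting.} Reduction for $<_w$ never raises $w$-degree. Hence for $P\in\bK\langle A\rangle$ of $w$-degree $\le k$, the class of $Pe_i$ is a combination of standard monomials of $w$-degree $\le k$ (up to the shift). There are $O(k^{D})$ of these, with $D=\dim\bK[\xx,\xxi]/J_i\le n$ under (3). Meanwhile the monomials supported on $A$ with $w$-degree $\le k$ number on the order of $k^{n+1}$.
\end{itemize}
For large $k$ the map $P\mapsto Pe_i+S$ therefore has a nonzero kernel. This gives (3)$\Rightarrow$(2) for any term order and closes your cycle.
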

The notation $\llbracket 1,r \rrbracket$ in the above theorem denotes the set $\{1,\dots,r\}$.

\subsection{Singular Locus}
\paragraph{Principal symbol}
Let $S \subset W_{\tt,\xx}(\tt)^r$ be a left submodule, and let $(\boldsymbol{0},\boldsymbol{1})$
denote the weight vector assigning weight $0$ to the variables $x_1,\dots,x_n$
and weight $1$ to the variables $\partial_{x_1},\dots,\partial_{x_n},\partial_{t_1},\dots,\partial_{t_m}$
(see~\cite[Chapter~1.2]{saitoGrobnerDeformationsHypergeometric2000} for a gentle introduction when $m=0$).
Let
\[
P = \sum_{\aalpha,\bbeta,\ggamma,i}
a_{\aalpha,\bbeta,\ggamma,i}(\tt)\,\xx^{\aalpha}\dd_\xx^{\bbeta}\dd_\tt^{\ggamma}e_i
\in W_{\tt,\xx}(\tt)^r,
\]
with $a_{\aalpha,\bbeta,\ggamma,i}(\tt)\in\bK(\tt)$ and set
\[
c
=
\max\left\{
|\bbeta| + |\ggamma|
\;\middle|\;
\exists\,\aalpha,i \text{ such that } a_{\aalpha,\bbeta,\ggamma,i}\neq 0
\right\}.
\]
The {principal symbol} (or {initial form}) of $P$ with respect to
$(\boldsymbol{0},\boldsymbol{1})$ is the element in $\bK(t)[\xx,\xxi,\zzeta]^r$ defined by
\begin{equation}
\ini_{(\boldsymbol{0},\boldsymbol{1})}(P)
=
\sum_{\substack{\aalpha,i \\ |\bbeta|+|\ggamma|=c}}
a_{\aalpha,\bbeta,\ggamma,i}\,\xx^{\aalpha}\xxi^{\bbeta}\zzeta^{\ggamma}e_i,
\end{equation}
where $\xi_1,\dots,\xi_n,\zeta_1,\dots,\zeta_m$ are commuting variables corresponding to
$\partial_{x_1},\dots,\partial_{x_n},\partial_{t_1},\dots,\partial_{t_m}$.

\paragraph{Initial module and characteristic variety}
The {initial module}
\[
\ini_{(\boldsymbol{0},\boldsymbol{1})}(S)
\subset
\bK(t)[\xx,\xxi,\zzeta]^r
\]
is the ideal generated by the initial forms of all elements of $S$.
The {characteristic variety} of the $W_{\tt,\xx}(\tt)$-module
\[
M = W_{\tt,\xx}(\tt)^r/S
\]
is defined as
\[
\charv(M)
=
V\!\left(
\ini_{(\boldsymbol{0},\boldsymbol{1})}(S)
\right)
\;\subset\;
\bK(t)^{2n+m}.
\]

\paragraph{Singular locus}
The singular locus of $S$ is a set-theoretic overestimation 
of the singularities (with respect to $\xx$) of the solutions of $S$. 
It plays a crucial role in the study of Weyl closure, as stated in~\cref{thm:singlocus2}.
Formally, it is defined as the Zariski closure of the image of
\[
\charv(M) \setminus \{\xi_1=\cdots=\xi_n=0, \zeta_1=0,\dots,\zeta_m=0\}
\]
under the coordinate projection
\[
\bK(t)^{2n+m} \to \bK(t)^n,
\qquad
(\xx,\xxi,\zzeta)\mapsto \xx.
\]
Equivalently, it is defined as the zero set of the elimination ideal
\[
\sing(I)
=
V\left(\left(
\ini_{(\boldsymbol{0},\boldsymbol{1})}(S)
:
\langle \xi_1,\dots,\xi_n,\zeta_1,\dots,\zeta_m\rangle^\infty
\right)
\cap
\bK(t)[\xx]\right),
\]
where, for a module $A\subset \bK(t)[\xx,\xxi,\zzeta]^r$, and an ideal $B\subset \bK(t)[\xx,\xxi,\zzeta]$ the {saturation} of $A$
with respect to $B$ is given by
\begin{equation}
A : B^\infty
=
\left\{
a \in \bK(t)[\xx,\xxi,\zzeta]^r
\;\middle|\;
\exists\,k\in\bN
\text{ such that }
B^k a \subset A
\right\}.
\end{equation}

An algorithm for computing the singular locus was proposed by
Oaku~\cite{oakuComputationCharacteristicVariety1994}.

\section{Holonomic \texorpdfstring{$W_{\tt,\xx}(\tt)$}{W(t,x)(t)}-modules}

In view of \cref{cor:crit}, the notion of holonomicity can be extended to
$W_{\xx,\tt}(\tt)$-modules as follows.

\begin{definition}\label{def:holonomicity-crit2}
Let $M=W_{\xx,\tt}(\tt)^r/S$ be a $W_{\xx,\tt}(\tt)$-module and let $e_1,\dots, e_r$ be the canonical basis of $W_{\xx,\tt}(\tt)^r$.
The module $M$ is said to be holonomic if for every subset 
$A\subseteq \{\xx,\partial_\xx,\partial_\tt \}$ with $|A| = n+1$
and every $i\in\llbracket 1,r\rrbracket$,  the intersection $S\cap\mathbb{K}(\tt)\langle A\rangle e_i$ is non-zero.
\end{definition}

When $m=0$, meaning that no variables $t_i$ are present, 
this definition agrees with the standard definition of holonomic $W_\xx$-modules by~\cref{cor:crit},
and when $n=0$, meaning that no variables $x_i$ are present, 
it agrees with the definition of D-finite modules, 
that is, modules that are finite-dimensional over $\bK(\tt)$.

The equivalence between the second and third point of~\cref{cor:crit}, 
which gave an algorithmic criterion for testing holonomicity, naturally extends to this setting.

\begin{lemma}\label{lemma:crit-hol}
Let $S$ be a non-trivial $W_{\xx,\tt}(\tt)$-module and let $G$ be a Gröbner basis of $S$.
The module $M=W_{\xx,\tt}(\tt)^r/S$ is holonomic if and only if for every subset $A\subseteq  \{\xx,\partial_\xx,\partial_\tt \}$ 
with $|A| = n+1$ and every $i\in\llbracket 1,r\rrbracket$, 
there exists $g\in G$ such that $\lm(g) \in \mathbb{K}(\tt)\langle A\rangle e_i$.
\end{lemma}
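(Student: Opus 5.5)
The plan is to prove the two implications separately. The forward direction is a direct Gröbner-basis argument. The backward direction is a dimension count over $\bK(\tt)$ in which the size of the space of standard monomials is compared with the size of $\bK(\tt)\langle A\rangle e_i$ in bounded degree. Throughout I assume, as in \cref{cor:crit}, that $G$ is a Gröbner basis for a degree-compatible term order on $W_{\tt,\xx}(\tt)^r$. Monomials are written in normal-ordered form $\xx^\aalpha\dd_\xx^\bbeta\dd_\tt^\ggamma e_j$, and $\bK(\tt)$ acts on the left. Divisibility of leading monomials means componentwise comparison of exponent vectors in the same component $e_j$.

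\emph{Condition implies criterion.} Suppose $P\in S\cap\bK(\tt)\langle A\rangle e_i$ is non-zero. Every monomial in the normal-ordered expansion of $P$ involves only the generators in $A$ and lies in component $e_i$, so in particular $\lm(P)\in\bK(\tt)\langle A\rangle e_i$. Since $G$ is a Gröbner basis of $S$, some $g\in G$ has $\lm(g)$ dividing $\lm(P)$. Then $\lm(g)$ lies in the same component $e_i$, and its exponent is supported in $A$. Hence $\lm(g)\in\bK(\tt)\langle A\rangle e_i$.

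\emph{Criterion implies condition.} Fix $i$ and assume that for every $(n+1)$-subset $A$ some $g_A\in G$ has $\lm(g_A)\in\bK(\tt)\langle A\rangle e_i$. Let $N=2n+m$ be the number of variables in $\{\xx,\partial_\xx,\partial_\tt\}$, and let $D$ be the largest exponent appearing in the monomials $\lm(g_A)$. Call a monomial in component $e_i$ standard if no leading monomial of $G$ divides it. A standard monomial has at most $n$ coordinates with exponent $\ge D$. Otherwise we could choose a set $A$ of $n+1$ such coordinates, and then $\lm(g_A)$ would divide it. Consequently the number of standard monomials in component $e_i$ of degree $\le k$ is at most $\binom{N}{n}D^{N}(k+1)^n=O(k^n)$.

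Now fix an $(n+1)$-subset $A$ and consider the $\bK(\tt)$-linear map sending $P\in\bK(\tt)\langle A\rangle e_i$ of degree $\le k$ to its normal form modulo $G$. The map is $\bK(\tt)$-linear because reduction only uses left multiplication and left $\bK(\tt)$-combinations. Since the order is degree-compatible, the normal form of $P$ has degree $\le k$. It is supported on standard monomials, and since $P$ lies in component $e_i$, those monomials are in component $e_i$ as well. The source space has dimension $\binom{k+n+1}{n+1}$, which grows like $k^{n+1}$, while the target has dimension $O(k^n)$. For $k$ large the map therefore has a non-zero kernel. A non-zero $P$ in the kernel has normal form $0$, so $P\in S$, and thus $S\cap\bK(\tt)\langle A\rangle e_i\neq 0$.

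\emph{Main obstacle.} The delicate point is the backward direction, specifically keeping the normal forms inside a finite-dimensional space whose dimension grows like $k^n$. This needs a degree-compatible order so that reduction does not raise the degree; I would state this hypothesis explicitly if the paper has not already fixed such an order. It also needs reduction in $W_{\tt,\xx}(\tt)$ to behave like reduction in the commutative setting as far as leading monomials are concerned. Here I would check that the leading monomial of $Q\,g$ for a monomial $Q$ is $Q\cdot\lm(g)$ up to a non-zero scalar in $\bK(\tt)$. The commutation corrections, including those coming from $\partial_{t_j}$ acting on coefficients in $\bK(\tt)$, only produce terms of lower degree, so this property holds.
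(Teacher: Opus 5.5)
\paragraph{Assessment.} The paper gives no proof of this lemma; it only says that the equivalence of (2) and (3) in \cref{cor:crit} ``naturally extends''. Your plan is the natural way to fill that in: divisibility of leading monomials for one direction, counting standard monomials for the other. Your forward direction is correct as written. The backward direction has two problems.

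\textbf{Components.} You claim that the normal form of $P\in\bK(\tt)\langle A\rangle e_i$ is supported in component $e_i$. This is false for module Gröbner bases. An element $g\in G$ with $\lm(g)$ in component $e_i$ can have lower terms in other components, and reducing by $g$ moves them into the remainder. In fact the fixed-$i$ implication you set out to prove is itself false:
\begin{itemize}
\item Take $m=0$, $n=1$, $r=2$, $S=W_x(e_1-e_2)$, and any order with $e_1\succ e_2$.
\item Then $G=\{e_1-e_2\}$ is a Gröbner basis, and its leading monomial $e_1$ lies in $\bK\langle x,\partial_x\rangle e_1$.
\item However, the normal form of $e_1$ is $e_2$.
\item Moreover $S\cap W_x e_1=0$, since $P(e_1-e_2)$ has $e_2$-component $-Pe_2\neq 0$ whenever $P\neq 0$. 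The hypothesis fails for $i=2$, consistent with the lemma itself.
\end{itemize}
The repair is to use the hypothesis for every $j\in\llbracket 1,r\rrbracket$, not only for $i$. Let $D$ be the maximum exponent over all $A$ and all $j$. Then the standard monomials of degree at most $k$ in all $r$ components number at most $r\binom{N}{n}D^N(k+1)^n=O(k^n)$. Take their span as the target of the normal-form map. The $\bK(\tt)$-linearity of that map is best justified by uniqueness of the fully reduced normal form.

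\textbf{The order.} Neither \cref{cor:crit} nor this lemma assumes a degree-compatible order; $G$ is an arbitrary Gröbner basis. Without that assumption your bound $\deg\mathrm{NF}(P)\le k$ fails. For example, with lex and $x\succ\partial_x$, the normal form of $x$ modulo $G=\{x-\partial_x^5\}$ is $\partial_x^5$. So as written you prove only the graded case. You can either add that hypothesis explicitly, or reduce to it as follows.
\begin{itemize}
\item Choose a weight $w\in\bN_{>0}^{2n+m}$ and shifts $s\in\mathbb{Z}^r$ such that, for each $g\in G$, $\lm(g)$ has maximal shifted weight $w\cdot(\aalpha,\bbeta,\ggamma)+s_j$ among the monomials of $g$. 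This is possible by a Farkas-type argument, because only finitely many comparisons are involved. The argument works for any module order inducing the same term order on every component, e.g.\ TOP or POT.
\item Let $\prec_w$ be this weight refined by the given order. Then $\lm_{\prec_w}(g)=\lm(g)$ for all $g\in G$, so the monomial module generated by $\lm(G)$ is contained in $\lm_{\prec_w}(S)$.
\item The standard monomials of either order form a $\bK(\tt)$-basis of $M$. Hence the two monomial modules coincide, and $G$ is also a Gröbner basis for $\prec_w$.
\item Your count now goes through with the shifted weight in place of total degree: the source still grows like $k^{n+1}$ and the target like $k^n$.
\end{itemize}
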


The classical equivalence between D-finiteness and holonomicity~\cite{kashiwaraHolonomicSystemsLinear1978,takayamaApproachZeroRecognition1992},
also generalizes to the present setting.

\begin{theorem}
Let $S$ be a submodule of $W_{\xx,\tt}(\xx,\tt)^r$. The following statements are equivalent:
\begin{enumerate}
\label{lemma:equivhol}
\item $W_{\xx,\tt}(\xx,\tt)^r/S$ is $D$-finite.
\item $W_{\xx,\tt}(\tt)^r/(S\cap W_{\xx,\tt}(\tt)^r)$ is holonomic as a $W_{\tt,\xx}(\tt)$-module
\item $W_{\xx,\tt}^r/(S\cap W_{\xx,\tt}^r)$ is holonomic as a $W_{\tt,\xx}$-module
\end{enumerate}
\end{theorem}

\begin{proof}
The theorem is proved in the case $r=1$; the general case follows by the same arguments but requires heavier notation.
It is well known that (1) and (3) are equivalent: 
the implication $(1)\Rightarrow(3)$ is proved by 
constructing a suitable filtration~\cite[Appendix]{takayamaApproachZeroRecognition1992},
and the converse is a straightforward application of 
\cref{cor:crit}.
It therefore suffices to prove $(3)\Rightarrow(2)$ and $(2)\Rightarrow(1)$. They are consequences of \cref{def:holonomicity-crit2,cor:crit}.
Only the implication $(3)\Rightarrow(2)$ is detailed; the other one is analogous.

Let $A \subseteq \{\xx,\partial_\xx,\partial_\tt\}$ be a set of cardinality $n+1$.
Then $A\cup\{\tt\}$ has cardinality $n+m+1$.
Applying \cref{cor:crit} to $S\cap W_{\xx,\tt}^r$ and the set $A\cup\{\tt\}$ yields that the $\bK$-vector space
\[
\bK\langle A\cup\{\tt\}\rangle \cap \left(S\cap W_{\xx,\tt}^r\right)
\]
is non-zero.
This implies that the $\bK(\tt)$-vector space
\[
\bK(\tt)\langle A\rangle \cap \left(S\cap W_{\xx,\tt}(\tt)^r\right)
\]
is non-zero, which proves (2).
\end{proof}

\section{Partial Weyl closure by saturation}
The goal of this section is to prove~\cref{thm:singlocus2},
 which states that the partial Weyl closure of a module $S \subset W_{\xx,\tt}(\tt)^r$ 
 can be computed by inverting a single polynomial vanishing on the singular locus of $S$.
A similar result was proved by Tsai~\cite[Theorem~2.2.1]{tsaiAlgorithmsAlgebraicAnalysis2000}
in the case where no parameters $\tt$ is present.
The argument given here is a straightforward extension of his approach to this mixed setting.

\begin{theorem}
    \label{thm:singlocus2}
    Let $S$ be a submodule of $W_{\xx,\tt}(\tt)^r$ with finite rank and let $f\in\bK(\tt)[\xx]$ 
    be a polynomial vanishing on $\sing(S)$. 
    Then, 
    \begin{equation}
     \cl_\xx(S) = W_{\tt,\xx}(\tt)[1/f]S\cap W_{\tt,\xx}(\tt).
    \end{equation}
\end{theorem}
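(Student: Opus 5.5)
The inclusion $W_{\tt,\xx}(\tt)[1/f]S\cap W_{\tt,\xx}(\tt)^r\subseteq \cl_\xx(S)$ is immediate, since $W_{\tt,\xx}(\tt)[1/f]\subseteq W_{\tt,\xx}(\tt,\xx)$. All the work lies in the reverse inclusion. Take $P\in\cl_\xx(S)$ and clear denominators: there exist $Q\in W_{\tt,\xx}(\tt)$ and a nonzero $g\in\bK(\tt)[\xx]$ with $gP\in S$ (after left-multiplying a representation $P=\sum R_j s_j$, $R_j\in W_{\tt,\xx}(\tt,\xx)$, by a common denominator). It therefore suffices to show the following. \emph{If $P\in W_{\tt,\xx}(\tt)^r$ and $gP\in S$ for some nonzero $g\in\bK(\tt)[\xx]$, then $f^kP\in W_{\tt,\xx}(\tt)S$ for some $k$.} Equivalently, we must show that the module $N=(W_{\tt,\xx}(\tt)S+W_{\tt,\xx}(\tt)P)/W_{\tt,\xx}(\tt)S$, which is cyclic and generated by the class $\bar P$, is annihilated by a power of $f$ on $\bar P$. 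This follows Tsai's strategy, run over the base field $\bK(\tt)$ with the $\partial_\tt$ carried along as extra symbol variables.

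The key steps are as follows. (i) $N$ is a quotient of $W^r/S$ by a submodule, hence also a subquotient of it. By the definition of $\charv$ through $\ini_{(\boldsymbol 0,\boldsymbol 1)}$ and a standard good-filtration argument, $\charv(N)\subseteq\charv(W^r/S)$. Consequently, away from $V(f)$ the characteristic variety of $N$ lies in the zero section $\{\xxi=0,\zzeta=0\}$. (ii) $N$ is killed by $g$ on its generator, so $N$ is supported inside $V(g)$: every element of $N$ is annihilated by a power of $g$, using that $\operatorname{ad}(g)$ is locally nilpotent on $W_{\tt,\xx}(\tt)$. Hence $\charv(N)\subseteq V(g)\times\bK(\tt)^{n+m}$. (iii) We now localize at a point of $V(g)\setminus V(f)$. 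There the symbol ideal of $N$ contains a power of $\xi,\zeta$, so $N$ is an integrable connection, i.e. a finitely generated $\bK(\tt)[\xx]_{h}$-module for suitable $h$ with $h\nmid f^\infty$ (projective on the open set). A connection supported on the proper closed set $V(g)$ must vanish on that open set, since a nonzero torsion-free connection cannot be $g$-torsion. (iv) Hence the support of $N$ as a $\bK(\tt)[\xx]$-module is contained in $V(f)$, and since $\bar P$ generates a finitely generated $\bK(\tt)[\xx]$-submodule locally, the Nullstellensatz gives $f^k\bar P=0$. Thus $f^kP\in W_{\tt,\xx}(\tt)S$, and so $P\in W_{\tt,\xx}(\tt)[1/f]S$.

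To make (iii)–(iv) rigorous, I would reduce to Tsai's lemma: a $W$-module whose characteristic variety is contained in the zero section over an open set $U$ is, over $U$, a coherent $\mathcal O_U$-module, and hence has no $\mathcal O$-torsion. Finite rank of $S$ is used to guarantee that $W^r/S$ is holonomic-like, so that $\charv$ is well behaved and the localization has finite rank over $\bK(\tt,\xx)$.

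The main obstacle is step (iii) in the mixed setting. The $\partial_\tt$ directions are symbol variables, but $\tt$ lies in the base field, so one must check that "$\charv$ in the zero section" still forces finite generation over $\bK(\tt)[\xx]_h$. I would try to handle this by treating $\partial_\tt$ as derivations of the coefficient field: one first gets finite generation over $\bK(\tt)[\xx]_h\langle\partial_\tt\rangle$-free directions from the $\xi$-part, and then uses the $\zeta$-part to bound the $\partial_\tt$-orders, mimicking the proof of \cref{lemma:equivhol}. Alternatively, one could clear $\tt$-denominators and apply Tsai's theorem directly in $W_{\tt,\xx}$ with polynomial $f\cdot d(\tt)$, then relocalize at $\bK[\tt]\setminus 0$.
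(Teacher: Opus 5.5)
Your proposal is correct and is essentially the paper's (Tsai's) argument. The hypothesis on $f$ puts a power of every $\xi_j$ and $\zeta_k$ into $\ini_{(\boldsymbol{0},\boldsymbol{1})}(S)[1/f]$. Your zero-section-plus-Nullstellensatz derivation of this is more robust than the paper's claim that the leading coefficient $g$ of an arbitrary such $L$ divides a power of $f$, which can fail for some choices of $L$. Hence $W_{\tt,\xx}(\tt)[1/f]^r/S[1/f]$ is finitely generated over $\bK(\tt)[\xx,1/f]$, so it is torsion-free by \cref{lem:singloc}, and $gP\in S$ then forces $f^kP\in S$.

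The detour through the cyclic submodule $N$ (a submodule, not a quotient, of $W^r/S$), $\operatorname{ad}(g)$-nilpotency and supports is unnecessary, since torsion-freeness can be applied directly to the class of $P$. The mixed-setting obstacle you flag is also not one: $\xxi$ and $\zzeta$ play symmetric roles in the graded ring $\bK(\tt)[\xx,1/f][\xxi,\zzeta]$. Moreover, the torsion-freeness proof only uses $\partial_\xx$ to lower the $\xx$-degree of a polynomial annihilator until it becomes a nonzero element of $\bK(\tt)$, i.e.\ a unit.
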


The proof of \cref{thm:singlocus2} relies on \cref{lem:singloc}, which is
a mixed Weyl algebra analogue of a classical result in $D$-module theory
(see, for instance,~\cite[Lemma~6]{grangerBasicCourseDifferential1997}).

\begin{lemma}\label{lem:singloc}
Let $f \in \bK(\tt)[\xx]$ and let $S$ be a submodule of $W_{\xx,\tt}(\tt)[1/f]^r$ such that 
\begin{equation}
M = W_{\xx,\tt}(\tt)[1/f]^r/S
\end{equation}
is finitely generated as a $\bK(\tt)[\xx,1/f]$-module. 
Then $M$ is torsion-free over $\bK(\tt)[\xx,1/f]$.
Equivalently, for any $a \in \bK(\tt)[\xx,1/f]$ and
$g \in W_{\xx,\tt}(\tt)[1/f]^r$, the implication
$a g \in S \Rightarrow g \in S$
holds.
\end{lemma}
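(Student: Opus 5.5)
The plan is to reproduce the classical argument that a $D$-module which is coherent over the structure sheaf is locally free, here in its purely algebraic form. Write $R=\bK(\tt)[\xx,1/f]$, so that $M$ is a finitely generated $R$-module and also a left module over $D=W_{\xx,\tt}(\tt)[1/f]$. The $\partial_{x_j}$ act on $M$ as derivations over $R$: $\partial_{x_j}(am)=\partial_{x_j}(a)m+a\partial_{x_j}m$ for $a\in R$. Note that $R$ is closed under $\partial_{x_j}$, since $\partial_{x_j}(1/f)=-\partial_{x_j}(f)/f^2$. The $\partial_{t_i}$ are not needed, because the coefficient field $\bK(\tt)$ is already inverted and $R$ is a finitely generated algebra over a field.

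The key steps are these.

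\emph{Step 1.} Consider the torsion submodule $T=\{m\in M \mid \exists\, 0\neq a\in R,\ am=0\}$. It is a $D$-submodule. Indeed, if $am=0$ then
\[
a^2\partial_{x_j}m=a\bigl(\partial_{x_j}(am)-\partial_{x_j}(a)m\bigr)=0 .
\]
Since $R$ is Noetherian, $T$ is a finitely generated $R$-module, and it is killed by a single nonzero $b\in R$, namely the product of annihilators of its generators.

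\emph{Step 2.} Show that $T=0$. Suppose $T\neq0$ and pick a maximal ideal $\mathfrak m$ of $R$ in the support of $T$; it contains $b$. Localizing gives a finitely generated $R_{\mathfrak m}$-module $T_{\mathfrak m}\neq0$, stable under the derivations $\partial_{x_j}$, which extend to $R_{\mathfrak m}$ by the quotient rule. Then show that a nonzero finitely generated module over the regular local ring $R_{\mathfrak m}$ carrying a compatible action of all the $\partial_{x_j}$ must be free. The natural route is the standard one. Choose a minimal system of generators $m_1,\dots,m_k$ by Nakayama, and suppose there is a nontrivial relation $\sum a_i m_i=0$. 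Take the relation whose coefficients have minimal $\mathfrak m$-adic order $s$. If $s=0$, some $a_i$ is a unit, which contradicts minimality of the generators. If $s\geq 1$, apply a suitable $\partial_{x_j}$. Characteristic zero, and the fact that the $\partial_{x_j}$ span the derivations of $R_{\mathfrak m}$ over $\bK(\tt)$ (the Jacobian criterion applied to the polynomial ring $\bK(\tt)[\xx]$ localized), let us pick $j$ so that the relation
\[
\sum_i \partial_{x_j}(a_i)m_i + \sum_i a_i\,\partial_{x_j}m_i=0
\]
has a coefficient of order exactly $s-1$. The terms $\partial_{x_j}m_i$ are expanded back in the generators, so their contribution to the coefficients has order at least $s$. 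This yields a relation of smaller order, a contradiction. Hence $T_{\mathfrak m}$ is free. A nonzero free module cannot be killed by $b\neq0$, since $R$ is a domain. So $T=0$.

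\emph{Step 3.} Restate the conclusion. If $ag\in S$ with $0\neq a\in R$, then the class of $g$ in $M$ is torsion, hence zero, so $g\in S$.

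The main obstacle is Step 2 at a maximal ideal whose residue field is a proper finite extension of $\bK(\tt)$, so that a clean local coordinate choice is less immediate. The first thing to try there is to pass to $\widehat{R_{\mathfrak m}}$, which by Cohen's structure theorem in characteristic zero is a power series ring $L[[y_1,\dots,y_n]]$ over a coefficient field $L$. The derivations $\partial_{x_j}$ extend to this ring and generate $\partial/\partial y_k$ over it, and the order-lowering argument runs there. Faithful flatness of completion then transfers freeness back to $R_{\mathfrak m}$. Alternatively, one can simply cite the analogue of~\cite[Lemma~6]{grangerBasicCourseDifferential1997}, whose proof is exactly this local argument and only uses that $R$ is a regular $\bK(\tt)$-algebra with $\mathrm{Der}$ generated by the $\partial_{x_j}$.
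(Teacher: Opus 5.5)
Your argument is correct, but after Step~1 it takes a genuinely different and much heavier route than the paper.

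Step~1 matches the paper. Your identity $a^2\partial_{x_j}m=0$ is in fact slightly cleaner than the paper's version, which introduces an auxiliary $q$ where $q=p$ already works.

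For Step~2, the paper never localizes. It notes that the annihilator of $T$ in $R=\bK(\tt)[\xx,1/f]$ is stable under the $\partial_{x_j}$. Indeed, if $P$ kills $T$, then $0=\partial_{x_j}(Pm)=\partial_{x_j}(P)\,m+P\,\partial_{x_j}m$, and $P\,\partial_{x_j}m=0$ because $\partial_{x_j}m\in T$. Take your nonzero annihilator $b$ and multiply it by a power of the unit $f$, so that it becomes a polynomial $P\in\bK(\tt)[\xx]$. Apply $\partial_{\xx}^{\alpha}$, where $\xx^{\alpha}$ is a monomial of $P$ of maximal total degree. This yields $\alpha!$ times its coefficient, a nonzero element of $\bK(\tt)$ and hence a unit, in the annihilator; thus $T=0$. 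The argument takes a few lines and avoids regular local rings, the Jacobian criterion, Cohen's structure theorem and faithful flatness altogether. In particular, the obstacle you anticipate at maximal ideals with non-rational residue field never arises.

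What your route buys is a stronger conclusion. The minimal-relation, order-lowering argument shows that any $W_{\xx,\tt}(\tt)[1/f]$-module finitely generated over $R$ is locally free, hence projective, over $R$. Applied directly to $M$, it would give torsion-freeness without introducing $T$ at all. Your justification for choosing $j$ with $\partial_{x_j}a_i\notin\mathfrak m^{s}$ is sound. It can also be done directly on $\operatorname{gr}_{\mathfrak m}R_{\mathfrak m}\cong L[Y_1,\dots,Y_n]$ via Euler's identity, without completing.
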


\begin{proof}[Proof of \Cref{lem:singloc}]
Let $T$ be the torsion module 
\begin{equation} 
    T = \{ m\in M \mid \exists p\in \bK(\tt)[\xx,1/f], \, pm = 0 \}.
\end{equation}
We want to prove that $T$ is $\{0\}$. 
We start by proving that it is a submodule of $M$.
The stability under left multiplication by $\xx_i$ and $1/f$ is immediate. 
Let $\delta \in \{\dd_\xx, \dd_\tt\}$, let $m \in T$, 
and choose $p \in \bK(\tt)[\xx,1/f]$ such that $p m = 0$.
Multiplying by $\delta$ yields
\begin{equation}
\delta pm = 0 = (\delta\cdot p)m + p (\delta\cdot m)
\end{equation}
Since $\delta\cdot p \in \bK(\tt)[\xx,1/f]$, the element $(\delta\cdot p)m$ lies in $T$.
Hence there exists $q \in \bK(\tt)[\xx,1/f]$ such that
$q (\delta\cdot p)\, m = 0$, which implies $p q\, (\delta\cdot m) = 0$.
Therefore $\delta\cdot m \in T$, proving the claim.

Since $M$ is finitely generated over $\bK(\tt)[\xx,1/f]$, so is $T$.
Let $m_1,\dots,m_\ell$ be generators of $T$, and choose
$p_1,\dots,p_\ell \in \bK(\tt)[\xx]$ such that $p_i m_i = 0$. 
Note that the $p_i$ can indeed be chosen without $f$ in the denominator.
Then the polynomial
\[
P = \prod_{i=1}^\ell p_i
\]
annihilates every element of $T$.
Applying any $\delta \in \{\dd_\xx, \dd_\tt\}$ for each $m\in T$ gives
\[
0 = \delta P m = (\delta\cdot P) m + P (\delta\cdot m).
\]
Since $P (\delta\cdot m) = 0$, it follows that $\delta\cdot P$ annihilates $m$. 
and hence annihilates $T$.
Iterating this argument for suitable choices of $\delta$, we conclude that
$1$ annihilates $T$, which proves that $T = \{0\}$.
\end{proof}

\begin{proof}[Proof of \Cref{thm:singlocus2}]

The inclusion
\[
W_{\tt,\xx}(\tt)[1/f]\,S \cap W_{\tt,\xx}(\tt)
\subseteq
\cl_\xx(S)
\]
is immediate.
We prove the reverse inclusion.
We first prove that for any $i \in \llbracket 1,r \rrbracket$ and any
$\varsigma \in \{\xxi,\zzeta\}$, there exist integers $a,n \in \bN$ such that
\[
f^n \varsigma^a e_i \in \ini_{(\boldsymbol{0},\boldsymbol{1})} (S).
\]
Indeed, since $S$ is a finite-rank module, for any
$i \in \llbracket 1,r \rrbracket$ and any
$\varsigma  \in \{\dd_\xx, \dd_\tt\}$,
there exists a nonzero element
$L' \in W_{\tt,\xx}(\tt,\xx)S\cap \bK(\tt,\xx)\langle \varsigma \rangle e_i$. 
Multiplying it by a suitable polynomial, we obtain a non-zero element 
$L \in  S \cap \bK(\tt)[\xx]\langle\varsigma  \rangle e_i$.
Its initial form $\ini_{(\boldsymbol{0},\boldsymbol{1})}(L)$ then belongs to
$\bK(\tt)[\xx]\langle\varsigma  \rangle e_i$ and is of the form
$g \varsigma ^a e_i$ with $g \in \bK(\tt)[\xx]$.
By definition of $f$, the polynomial $g$ divides a power of $f$.
let $f^n$ denote this power, then
\[
\ini_{(\boldsymbol{0},\boldsymbol{1})}\!\left(\frac{f^n}{g} L\right) = f^n \varsigma ^a e_i,
\]
which proves the claim.
It follows that the quotient
\[
\frac{\bK(\tt)[\xx,\xxi,\zzeta,1/f]^r}{\ini_{(\boldsymbol{0},\boldsymbol{1})}(S)[1/f]}
\]
is finitely generated over $\bK(\tt)[\xx,1/f]$.
Consequently,
\[
N = \frac{W_{\tt,\xx}(\tt)[1/f]^r}{S[1/f]}
\]
is finitely generated as a $\bK(\tt)[\xx,1/f]$-module,
and by \cref{lem:singloc}, the module $N$ is torsion-free.

Finally, let $a \in \cl_\xx(S)$.
There exists $q \in \bK(\tt)[\xx]$ such that $q a \in S$.
Thus the class of $q a$ vanishes in $N$, and torsion-freeness implies
that the class of $a$ also vanishes.
Hence $a \in S[1/f]$.
Since $a \in W_{\tt,\xx}(\tt)^r$, this proves the result.
\end{proof}

\cref{thm:singlocus2} can be reformulated in a more algorithmic way using left saturation module 

The left saturation module of $S\subseteq W_{\xx,\tt}(\tt)^r$ with respect to the polynomial $f\in \bK[\xx,\tt]$ is the left module
\begin{equation}\label{eq:def-sat}
S : (f)^\infty = \left\{ L\in W_{\xx,\tt}(\tt)^r \mid \exists i\in\bN, \, f^iL \in S \right\}.
\end{equation} 
\begin{theorem}
The set $S : (f)^\infty$ is a left submodule of $W_{\xx,\tt}(\tt)^r$
\end{theorem}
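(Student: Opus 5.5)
We need to show that $S:(f)^\infty$ is closed under addition and under left multiplication by elements of $W_{\xx,\tt}(\tt)$. Note that $f$ lies in the centre of $\bK(\tt)[\xx]$ but not of the Weyl algebra, so the second property is the real content.

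\textbf{Addition and scalars.} If $f^iL\in S$ and $f^jL'\in S$, take $k=\max(i,j)$. Then $f^kL=f^{k-i}(f^iL)\in S$ because $S$ is a left module, and likewise $f^kL'\in S$, so $f^k(L+L')\in S$. Closure under multiplication by $\bK(\tt)[\xx]$ is immediate, since such elements commute with $f$. Indeed, $x_j$ and elements of $\bK(\tt)$ commute with $f$, giving $f^i(x_jL)=x_j f^iL\in S$. It therefore suffices to treat left multiplication by the generators $\delta\in\{\partial_{x_1},\dots,\partial_{x_n},\partial_{t_1},\dots,\partial_{t_m}\}$. The general case follows by induction on the number of factors in a word in these generators, combined with closure under addition and coefficients.

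\textbf{Derivations.} Let $f^iL\in S$, and aim to show $f^{i+1}\delta L\in S$. The Leibniz rule in $W_{\xx,\tt}(\tt)$ gives
\[
\delta f^{i} = f^{i}\delta + i f^{i-1}(\delta\cdot f).
\]
Multiplying on the left by $f$ yields
\[
f^{i+1}\delta L = f\,\delta (f^iL) - i\,(\delta\cdot f)\, f^{i}L .
\]
Both terms on the right lie in $S$. The first is a left multiple of $f^iL\in S$. The second is $i(\delta\cdot f)$ times $f^iL$, and $\delta\cdot f\in\bK(\tt)[\xx]$ has coefficients in $\bK(\tt)$ even when $\delta=\partial_{t_j}$, since $f\in\bK[\xx,\tt]$. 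Hence $\delta L\in S:(f)^\infty$.

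\textbf{Main obstacle.} The one point requiring care is the commutation step: the exponent must be raised from $i$ to $i+1$ to absorb the $f^{i-1}$ term, and $\delta\cdot f$ must stay in the coefficient ring. The case $i=0$ is trivial, since then $L\in S$ already. The induction over words then finishes the proof.
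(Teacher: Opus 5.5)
Your proof is correct and is essentially the paper's argument: the operator $f\delta - i\,(\delta\cdot f)$ that you apply to $f^iL$ is the same as the paper's $\delta f - (i+1)(\delta\cdot f)$, because $f\delta = \delta f - (\delta\cdot f)$ in $W_{\xx,\tt}(\tt)$. The remaining steps also match the paper: closure under addition via a common exponent, commuting polynomial coefficients past $f$, and reducing to the generators $\partial_{x_j},\partial_{t_j}$.
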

\begin{proof}
The set $S : (f)^\infty$ is clearly non-empty, contained in $W_{\xx,\tt}(\tt)^r$ and stable under addition. 
It remains to prove the stability under left multiplication by elements of $W_{\xx,\tt}(\tt)$. 
The stability under multiplication by $\bK(\tt)[\xx]$ is immediate. 
We now check the stability under the action of $\partial_{x_i}$ (the case $\partial_{t_i}$ is analogous).
Let $L\in W_{\xx,\tt}(\tt)$ and $j\in\bN$ such that $f^jL \in S$, 
then a computation shows that
\begin{equation}
f^{j+1} \partial_{x_i}L = \left(\partial_{x_i}f - (j+1)\frac{\partial f}{\partial x_i}\right) f^{j}L.
\end{equation}
The right-hand side belongs to $S$ as $f^{j}L \in S$ by assumption and $S$ is a left-module.
 Hence, $\partial_{x_i}L\in S : (f)^\infty$, which concludes the proof.
\end{proof}

\begin{corollary} 
\label{cor:singlocus2}
    Let $S$ be a submodule of $W_{\xx,\tt}(\tt)^r$ with finite rank and let $f\in\bK(\tt)[\xx]$ 
    be a polynomial vanishing on $\sing(S)$. 
    Then, 
    \begin{equation}
     \cl_\xx(S) = S : (f)^\infty
    \end{equation}
\end{corollary}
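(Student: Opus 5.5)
The plan is to reduce \cref{cor:singlocus2} to \cref{thm:singlocus2} by proving the set identity
\begin{equation*}
S : (f)^\infty \;=\; W_{\tt,\xx}(\tt)[1/f]\,S \cap W_{\tt,\xx}(\tt)^r .
\end{equation*}
Here the right-hand side is read as the intersection with the free module $W_{\tt,\xx}(\tt)^r$, as in the proof of \cref{thm:singlocus2}. Once this identity holds, \cref{thm:singlocus2} gives $\cl_\xx(S)$ directly. I would prove the two inclusions separately. The first is formal; the second rests on the key structural fact about localization of the Weyl algebra.

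For $S:(f)^\infty \subseteq W_{\tt,\xx}(\tt)[1/f]S\cap W_{\tt,\xx}(\tt)^r$, I take $L\in W_{\tt,\xx}(\tt)^r$ with $f^iL\in S$. Then $L = f^{-i}(f^iL)$, which is a left multiple of an element of $S$ by $f^{-i}\in W_{\tt,\xx}(\tt)[1/f]$. So $L$ lies in the localized module, and it lies in $W_{\tt,\xx}(\tt)^r$ by assumption.

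For the reverse inclusion, I take $L\in W_{\tt,\xx}(\tt)^r$ with $L=\sum_k Q_k s_k$, where $Q_k\in W_{\tt,\xx}(\tt)[1/f]$ and $s_k\in S$. The key step is that every element of $W_{\tt,\xx}(\tt)[1/f]$ can be written with $f$-powers on the \emph{left}, that is, as $f^{-N}P$ with $P\in W_{\tt,\xx}(\tt)$. To see this, I would show by induction on degree in $\dd_\xx$ that any monomial $\dd^{\bbeta}f^{-j}$ can be rewritten as $f^{-j-|\bbeta|}P$ for some $P$. The base case is $\partial_{x_i}f^{-j} = f^{-j}\partial_{x_i} - j\,\frac{\partial f}{\partial x_i}f^{-j-1}$. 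The $\dd_\tt$ case is analogous, since $f\in\bK(\tt)[\xx]$, so $\partial f/\partial t_i$ still lies in $\bK(\tt)[\xx]$. This is exactly the left Ore condition for the multiplicative set $\{f^k\}$, and it also follows from the identity already used in the proof that $S:(f)^\infty$ is a module.

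Choosing a common exponent $N$ for all $Q_k$, I get $L=f^{-N}\sum_k P_k s_k$, hence $f^N L=\sum_k P_k s_k\in S$, and so $L\in S:(f)^\infty$.

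The main obstacle is this rewriting step, namely justifying that $W_{\tt,\xx}(\tt)[1/f]$ consists of left fractions $f^{-N}P$, together with the uniform choice of $N$. I expect it to be routine but notationally heavy, since it requires the induction on the order in $\dd_\xx,\dd_\tt$. Once it is in place, combining the identity with \cref{thm:singlocus2} yields $\cl_\xx(S)=S:(f)^\infty$.
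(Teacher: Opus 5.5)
Your proposal is correct and follows essentially the paper's route: the paper presents \cref{cor:singlocus2} as a direct reformulation of \cref{thm:singlocus2}, and this rests on exactly the identity $W_{\tt,\xx}(\tt)[1/f]\,S\cap W_{\tt,\xx}(\tt)^r = S:(f)^\infty$ that you establish via the left Ore property of $\{f^k\}$, which is the same commutation identity used to show that $S:(f)^\infty$ is a submodule. The rewriting step you flag as the main obstacle is even lighter than you expect: with the usual description of $W_{\tt,\xx}(\tt)[1/f]$ as $\bK(\tt)[\xx,1/f]\langle\dd_\xx,\dd_\tt\rangle$, every element has the form $\sum_{\bbeta} a_{\bbeta}\,\dd^{\bbeta}$ with coefficients $a_{\bbeta}\in\bK(\tt)[\xx,1/f]$ on the left, so clearing a common denominator gives $f^{-N}P$ directly.
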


\begin{example}
We present a simple example illustrating the computation of a Weyl closure.
Consider the rational function $f(x,y) = 1/(x^2-y^3)$. It is annihilated by the two operators

\begin{equation}
g_1 = \partial_x(x^2-y^3), \quad g_2=\partial_y(x^2-y^3),
\end{equation} 
which generate a finite-rank ideal in $W_{x,y}$.  
These two operators actually generate the annihilator of $f$ in $W_{x,y}(x,y)$.
However, they do not generate a holonomic ideal. 
A third operator annihilating $f$ can be obtained from 
$g_1$ and $g_2$ using a division by $x^2-y^3$:
\begin{equation} 
    3x\partial_x + 2y\partial_y + 1 = \frac{1}{x^2-y^3}(3xg_1 + 2yg_2).
\end{equation}
The ideal generated by $g_1,g_2,g_3$ corresponds to the Weyl closure of $W_{x,y}g_1 + W_{x,y}g_2$,
which is also the annihilator of $f$ in $W_{x,y}$.
\end{example}

\section{The algorithm}
An algorithm is now described for computing a holonomic approximation of the
partial Weyl closure $\cl_\xx(S)$.
Following Rabinowitsch's trick~\cite{rabinowitschHilbertschenNullstellensatz1930},
a new variable $T$ is introduced, which formally represents $1/f$.
The variable $T$ commutes with $\xx, \tt$, and satisfies for all
$\ell \in \{\xx,\tt\}$
\begin{equation}\label{eq:comm-T}
    \partial_\ell T = T\partial_\ell - f_\ell T^2,
\end{equation}
where $f_\ell = \frac{\partial f}{\partial \ell}$.
These relations define a new algebra $W_{\xx,\tt}(\tt)[T]^r$, whose quotient by
\[
    U_f = \sum_{i=1}^r W_{\xx,\tt}(\tt)[T]\,(fT-1)e_i
\]
is isomorphic to $W_{\xx,\tt}(\tt)[1/f]^r$.

As shown in \cref{lem:sat}, left saturation modules can be computed by
eliminating the variable $T$ in a suitable submodule of this algebra.

\begin{lemma}
    \label{lem:sat}
    Let $S$ be a submodule of $W_{\xx,\tt}(\tt)^r$ and $f$ be a polynomial of $\bK(\tt)[\xx]$, then
    \[
    S : (f)^\infty = \left(W_{\xx,\tt}(\tt)[T]S + U_f\right) \cap W_{\xx,\tt}(\tt)^r.
    \]
\end{lemma}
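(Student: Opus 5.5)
The plan is to prove the two inclusions separately, working throughout with the algebra $W_{\xx,\tt}(\tt)[T]$ and the quotient map
\[
\pi : W_{\xx,\tt}(\tt)[T]^r \to W_{\xx,\tt}(\tt)[T]^r/U_f \cong W_{\xx,\tt}(\tt)[1/f]^r,
\]
which sends $T\mapsto 1/f$ and is the identity on $W_{\xx,\tt}(\tt)^r$. The key preliminary check is that $\pi$ is well defined as a homomorphism of left modules. The relation \eqref{eq:comm-T} is exactly the rule $\partial_\ell(1/f)=(1/f)\partial_\ell - f_\ell/f^2$ in $W_{\xx,\tt}(\tt)[1/f]$, so $T\mapsto 1/f$ defines an algebra morphism $W_{\xx,\tt}(\tt)[T]\to W_{\xx,\tt}(\tt)[1/f]$. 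Its kernel on $W_{\xx,\tt}(\tt)[T]^r$ contains $U_f$, since $fT-1\mapsto 0$; the paper asserts the isomorphism, which we take as given.

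For the inclusion ``$\subseteq$'', take $L\in S:(f)^\infty$ with $f^iL\in S$. Using $T$ commuting with $f$ and the congruence $Tf\equiv 1 \pmod{U_f}$ (note $fT=Tf$ since $T$ commutes with $\xx,\tt$), I would write
\[
L = T^i f^i L + (1-T^if^i)L .
\]
The first term lies in $W_{\xx,\tt}(\tt)[T]S$. For the second, factor $1-(Tf)^i=(1+Tf+\cdots+(Tf)^{i-1})(1-Tf)$. Because $1-Tf$ is central with respect to the polynomial variables but not with respect to $\partial$, it is cleaner to write
\[
(1-(Tf)^i)L=\sum_{k<i}(Tf)^k(1-fT)L .
\]
Then I would check that $(1-fT)L$ lies in $U_f$. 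Here $U_f$ is a left module generated by $(fT-1)e_j$, so I need $(fT-1)$ on the left of $e_j$ after multiplying by $L$ on the right. This is the subtle point: $U_f$ is a left module, and $(fT-1)L$ is not obviously in it. I would therefore argue via $\pi$ instead: the kernel of $\pi$ equals $U_f$ (by the stated isomorphism), and $\pi((1-fT)L)=0$, so $(1-fT)L\in U_f$. Hence $L\in W_{\xx,\tt}(\tt)[T]S+U_f$.

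For the inclusion ``$\supseteq$'', take $L\in W_{\xx,\tt}(\tt)^r$ with $L=\sum_k A_k s_k+u$, where $A_k\in W_{\xx,\tt}(\tt)[T]$, $s_k\in S$ and $u\in U_f$. Applying $\pi$ gives $L=\sum_k\pi(A_k)s_k$ with $\pi(A_k)\in W_{\xx,\tt}(\tt)[1/f]$. Using \eqref{eq:comm-T} repeatedly, every element of $W_{\xx,\tt}(\tt)[1/f]$ can be written with all powers of $1/f$ on the left, as $f^{-N}B$ with $B\in W_{\xx,\tt}(\tt)$. Choosing a common $N$ for all $k$, we get $f^NL=\sum_k B_ks_k\in S$, so $L\in S:(f)^\infty$.

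The main obstacle is the justification that $\ker\pi=U_f$, that is, that the quotient by the left module $U_f$ really is $W_{\xx,\tt}(\tt)[1/f]^r$. To establish it directly, I would first show that $U_f$ is in fact a two-sided object, by checking that $(fT-1)$ normalizes: for instance $\partial_\ell(fT-1)=(fT-1)\partial_\ell+f_\ell T(1-fT)$, so $(fT-1)W\subseteq W(fT-1)$. Then any element of $W_{\xx,\tt}(\tt)[T]$ reduces modulo $U_f$ to the form $T^NB$, and $T^NB\in U_f$ forces $B=0$ by comparing images in the localization.
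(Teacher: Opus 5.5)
Your proof is correct and has the same skeleton as the paper's. For one inclusion you split $L=T^if^iL+(1-T^if^i)L$ and factor $1-(Tf)^i$; for the other you clear powers of $f$. The two key steps, however, are routed differently. Write $W=W_{\xx,\tt}(\tt)$.

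For ``$\subseteq$'', your detour through $\ker\pi=U_f$ is not needed. The identity you derive, $\partial_\ell(fT-1)=(fT-1)(\partial_\ell-f_\ell T)$, or equivalently $(fT-1)\partial_\ell=(\partial_\ell+f_\ell T)(fT-1)$, is exactly \cref{lem:bilatere}. It gives $(fT-1)W[T]\subseteq W[T](fT-1)$, hence $(fT-1)L=\sum_j (fT-1)L_je_j\in U_f$ directly, and this is how the paper argues.

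For ``$\supseteq$'', the paper stays inside $W[T]^r$. It writes the $W[T]S$-part as $\sum_i T^ig_i$ with $g_i\in S$, multiplies by $f^p$, and obtains $f^pL-g\in U_f\cap W^r$, which it declares to be zero without justification. You instead push everything into $W[1/f]^r$ through $\pi$ and clear denominators there. This needs only the trivial inclusion $U_f\subseteq\ker\pi$ and the injectivity of $W^r\to W[1/f]^r$. Those are precisely the facts that justify the paper's tacit claim $U_f\cap W^r=0$.

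So your route makes explicit what the paper leaves implicit, at the price of an extra, correct but superfluous, proof of $\ker\pi=U_f$. That proof contains a small slip. What you need is that $T^NB\in\ker\pi$ forces $B=0$, which holds because $\pi(T^NB)=f^{-N}B$. Your version with $U_f$ in place of $\ker\pi$ is strictly weaker than what the argument uses, although the same comparison in the localization proves both.
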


The proof is based on the following lemma, whose immediate consequence is 
\begin{equation}
U_f = \sum_{i=1}^r (fT-1)W_{\xx,\tt}(\tt)[T]e_i
\end{equation}

\begin{lemma}\label{lem:bilatere}
The ideal generated by $fT-1$ is two-sided, that is 
\[
    W_{\xx,\tt}(\tt)[T](fT-1)  = (fT-1)W_{\xx,\tt}(\tt)[T]
\]
\end{lemma}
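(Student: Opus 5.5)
It suffices to show that $fT-1$ \emph{normalizes} the algebra $R = W_{\xx,\tt}(\tt)[T]$: for every generator $u$ of $R$ there are $u',u''\in R$ with $u(fT-1) = (fT-1)u'$ and $(fT-1)u = u''(fT-1)$. Indeed, if this holds for generators, it holds for products by induction on word length. For example, if $u=u_1u_2$, then
\[
u_1u_2(fT-1) = u_1(fT-1)u_2' = (fT-1)u_1'u_2'.
\]
Closure under sums is clear, so $R(fT-1)\subseteq(fT-1)R$, and symmetrically the reverse inclusion holds. The generators to treat are the elements of $\bK(\tt)$, the $x_i$, $T$, the $\partial_{x_i}$ and the $\partial_{t_j}$.

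For the commuting generators the claim is immediate. The elements $x_i$, $T$ and $a(\tt)\in\bK(\tt)$ all commute with $f\in\bK(\tt)[\xx]$ and with $T$, hence with $fT-1$, so we can take $u'=u''=u$.

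The only real work is the derivations $\partial_\ell$ for $\ell\in\{\xx,\tt\}$. We compute $\partial_\ell(fT-1)$ using $\partial_\ell f = f\partial_\ell + f_\ell$ and \eqref{eq:comm-T}:
\[
\partial_\ell f T = f\partial_\ell T + f_\ell T = fT\partial_\ell - ff_\ell T^2 + f_\ell T .
\]
Hence
\[
\partial_\ell(fT-1) = (fT-1)\partial_\ell + f_\ell T - f f_\ell T^2 = (fT-1)\partial_\ell - f_\ell T(fT-1),
\]
using that $f$, $f_\ell$ and $T$ commute. Since $f_\ell T$ commutes with $fT-1$, this equals
\[
(fT-1)\bigl(\partial_\ell - f_\ell T\bigr),
\]
so $u'=\partial_\ell - f_\ell T$. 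The same identity read the other way gives
\[
(fT-1)\partial_\ell = \bigl(\partial_\ell + f_\ell T\bigr)(fT-1),
\]
so $u''=\partial_\ell + f_\ell T$.

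The main obstacle is making sure the sign and the $T^2$ term in \eqref{eq:comm-T} match. The cancellation $f_\ell T - ff_\ell T^2 = -f_\ell T(fT-1)$ is exactly what makes the ideal two-sided. One should also check that \eqref{eq:comm-T} is consistent, that is, that $R$ is a well-defined algebra with a PBW-type basis. The paper takes this for granted when defining the algebra, so I would not reprove it. With both normalizing identities in hand, the induction above gives $R(fT-1)=(fT-1)R$.
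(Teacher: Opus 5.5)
Your proof is correct and follows the paper's. The identities $\partial_\ell(fT-1)=(fT-1)(\partial_\ell - f_\ell T)$ and $(fT-1)\partial_\ell=(\partial_\ell + f_\ell T)(fT-1)$ are exactly the two formulas the paper uses for the two inclusions. You also make explicit the reduction from ring generators to arbitrary elements, including the trivially commuting generators, which the paper leaves implicit.
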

\begin{proof}
The direct inclusion is a consequence of the equality 
\[
\partial_{\ell} (Tf-1) = (Tf-1)(\partial_\ell - Tf_\ell) 
\]
for $\ell\in\{\xx,\tt\}$ 
and the reverse inclusion follows from the analogous formula 
\[
(Tf-1)\partial_\ell  = (\partial_\ell + Tf_\ell) (Tf-1).
\]
\end{proof}
\cref{lem:sat} can now be proved.
\begin{proof}
To simplify the notation, the result is proved for $r=1$ only, omitting $e_1$. 
In this case, $U_f$ becomes $U_f = W_{\xx,\tt}(\tt)[T](fT-1)$. Let us begin with the direct inclusion. 
Let $L\in S : (f)^\infty$, then there exists $i\in\bN^*$ such that $f^iL \in S$. Hence, $T^if^iL$ 
decomposes as
\[
    T^if^iL = (T^if^i -1)L + L \in W_{\xx,\tt}(\tt)[T]S.
\]
Besides $fT-1$ divides $T^if^i-1$, which implies by \cref{lem:bilatere} that $(T^if^i-1)$ is in $U_f$.
This proves that $L$ is in $W_{\xx,\tt}(\tt)[T]S + U_f$. Because $L$ is in $W_{\xx,\tt}(\tt)$ by definition, 
 it belongs to the intersection. 

Let us now prove the reverse inclusion. Let $L\in (W_{\xx,\tt}(\tt)[T]S + U_f)\cap W_{\xx,\tt}(\tt)$, 
then $L$ decomposes as 
\[ 
L = \sum_{i=1}^p T^ig_i + a
\]
with $p\in\bN$, $g_1,\dots,g_p\in S$ and $a\in U_f$. Then 
\[
f^pL = \sum_{i=1}^p (T^if^i-1)f^{p-i}g_i + \sum_{i=1}^p f^{p-i}g_i + f^pa  = g + \tilde{a}
\]
with $g = \sum_i f^{p-i}g_i \in S$ and $\tilde{a} = f^pa + \sum_{i=1}^p (T^if^i-1)f^{p-i}g_i$. 
Using again that $fT-1$ divides $T^if^i-1$ and \cref{lem:bilatere}, we obtain that $\tilde{a}$ is in $U_f$. 
However, neither $f^pL$ nor $g$ involves the variable $T$,
thus $\tilde{a}$ is in $U_f\cap W_{\xx,\tt}(\tt)$ and must be zero. This proves that $f^pL$ is in $S$.
\end{proof}

\Cref{lem:sat} suggests a strategy for computing the partial Weyl closure using saturation.
 In the commutative case this would be done 
by fixing a monomial order eliminating $T$ and computing a Gröbner basis with respect to this order. 
However, this is not possible in this non-commutative setting as there is no order on $W_{\xx,\tt}(\tt)[T]^r$ eliminating $T$ 
that is compatible with multiplication, in the sense that the product of leading terms is the leading term of the product (recall~\cref{eq:comm-T}). 
This problem is tackled by seeing $W_{\xx,\tt}(\tt)[T]^r$ as an infinite-rank $W_{\xx,\tt}(\tt)$-module 
and performing Gröbner basis computation on truncations of this infinite-rank module. 
This idea of computing Gröbner bases of truncations of an infinite-rank module has already 
been used by Takayama in his integration algorithm~\cite{takayamaAlgorithmConstructingIntegral1990}. 
As in his original algorithm, the termination criterion used here is based on holonomicity and does not
guarantee that the full saturation module is obtained.

A basis of the $W_{\xx,\tt}(\tt)$-module $W_{\xx,\tt}(\tt)[T]^r$ is $\{T^je_i \mid j\in\bN, i\in\llbracket 1,r\rrbracket\}$. 
The basis elements $T^0e_i$ will be identified with $e_i$. The degree of $g\in W_{\xx,\tt}(\tt)[T]^r$ in $T$,
denoted $\deg_T(g)$, is defined as the largest integer $j$ that appears in the decomposition of $g$ with non-zero coefficient in this basis.
Let $G$ be a finite subset of $W_{\xx,\tt}(\tt)[T]^r$ and $\ell$ be the largest degree in $T$ appearing in $G$. 
Then $W_{\xx,\tt}(\tt)G$ is included in the finitely generated $W_{\xx,\tt}(\tt)$-module $\sum_{i=0}^{\ell}\sum_{j=1}^r W_{\xx,\tt}(\tt)T^{i}e_j$ 
in which it is possible to compute Gröbner bases over $W_{\xx,\tt}(\tt)$.
Coupled with \cref{lem:sat} this leads to~\cref{algo:pwc}.

\begin{alg}{Partial Weyl closure}{pwc}
    \textbf{Input:}
    \begin{algovals}
        \item a presentation $S\subseteq W_{\xx,\tt}(\tt)^r$ of a finite rank module
        \item a polynomial $f\in\bK(\tt)[\xx]$ vanishing on $\sing(S)$
        \end{algovals}
    \textbf{Output:} 
    \begin{algovals}
        \item a module $S' \subset\cl_\xx(S)$ such that $W_{\xx,\tt}(\tt)^r/S'$ is holonomic
    \end{algovals}
    \begin{pseudo}
        Fix an order on $ W_{\xx,\tt}(\tt)[T]^r$ eliminating $T$ \\ 
        $H \gets \gens(S)\cup \{(fT-1)e_i \mid i=1,\dots,r\}$\\
        $G \gets H; \quad s\gets 0$ \\ 
        \kw{while} $W_{\xx,\tt}(\tt)^r/ (W_{\xx,\tt}(\tt)G\cap W_{\xx,\tt}(\tt)^r)$ is not holonomic \\+ 
            $G \gets$ GröbnerBasis$\left(\left\{ T^ig \mid i\in\mathbb{N}, \, g\in H, \, \deg_T(T^ig)\leq s\right\}\right)$ \label{line:addG}\\ 
            $s \gets s+1$ \\-
        \kw{return} $G\cap W_{\xx,\tt}(\tt)^r$
    \end{pseudo}
\end{alg}

\begin{theorem}
\Cref{algo:pwc} terminates and returns  generators $G'$ of a $W_{\xx,\tt}(\tt)$-module included in $\cl_\xx(S)$ such that $W_{\xx,\tt}(\tt)^r/W_{\xx,\tt}(\tt)G'$ is holonomic.
\end{theorem}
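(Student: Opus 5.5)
The proof has two parts: \emph{correctness}, meaning the returned module is contained in $\cl_\xx(S)$, and \emph{termination}, meaning the loop condition eventually fails. Once termination is established, holonomicity of the output is exactly the negation of the loop condition.

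For correctness I would use \cref{lem:sat} and \cref{cor:singlocus2}. At every stage, $G$ is a Gröbner basis of the $W_{\xx,\tt}(\tt)$-module generated by the truncated set $\{T^ig \mid g\in H,\ \deg_T(T^ig)\le s\}$. That set is contained in $W_{\xx,\tt}(\tt)[T]S+U_f$, since each $T^i s_j$ lies in $W_{\xx,\tt}(\tt)[T]S$ and each $T^i(fT-1)e_k$ lies in $U_f$. Hence
\[
W_{\xx,\tt}(\tt)G\cap W_{\xx,\tt}(\tt)^r\subseteq \left(W_{\xx,\tt}(\tt)[T]S+U_f\right)\cap W_{\xx,\tt}(\tt)^r = S:(f)^\infty=\cl_\xx(S).
\]
Because the order eliminates $T$ (a position-over-term order on the basis $T^je_i$ in which a larger $j$ dominates), the standard elimination property shows that $G\cap W_{\xx,\tt}(\tt)^r$ generates $W_{\xx,\tt}(\tt)G\cap W_{\xx,\tt}(\tt)^r$. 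So the returned $G'$ generates a submodule of $\cl_\xx(S)$, and at exit this module is holonomic.

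For termination, the plan has two steps.
\begin{enumerate}
\item Show that $\cl_\xx(S)$ is holonomic. Since $S$ has finite rank, $W_{\xx,\tt}(\tt,\xx)S$ is D-finite, and $\cl_\xx(S)=W_{\xx,\tt}(\tt,\xx)S\cap W_{\xx,\tt}(\tt)^r$. By \cref{lemma:equivhol}, (1)$\Rightarrow$(2), the quotient $W_{\xx,\tt}(\tt)^r/\cl_\xx(S)$ is holonomic. By \cref{def:holonomicity-crit2}, for each of the finitely many pairs $(A,i)$ there is then a nonzero $L_{A,i}\in\cl_\xx(S)\cap\bK(\tt)\langle A\rangle e_i$.
\item Show that each $L_{A,i}$ appears in the truncated module once $s$ is large. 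By \cref{lem:sat}, one can write $L_{A,i}=\sum_j c_jT^{k_j}s_j+\sum_k b_k(fT-1)e_k$ with finitely many terms, where $c_j,b_k\in W_{\xx,\tt}(\tt)[T]$. Expand every coefficient in the form $\sum_\ell P_\ell T^\ell$ with $P_\ell\in W_{\xx,\tt}(\tt)$, i.e.\ with $T$ pushed to the right. Then $L_{A,i}$ is a $W_{\xx,\tt}(\tt)$-linear combination of elements $T^{\ell}g$ with $g\in H$ and $\ell$ bounded by some $N_{A,i}$.
\end{enumerate}

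The rewriting in step 2 relies on \cref{eq:comm-T}: moving $T$ to the right past derivations only creates further powers of $T$ multiplied by polynomial coefficients. Hence
\[
W_{\xx,\tt}(\tt)[T]=\sum_\ell W_{\xx,\tt}(\tt)T^\ell ,
\]
which is the $W_{\xx,\tt}(\tt)$-module structure already used in the paper. For $s\ge \max N_{A,i}+\max_{g\in H}\deg_T g$, the generating set of $G$ contains all of these $T^\ell g$. So every $L_{A,i}$ lies in $W_{\xx,\tt}(\tt)G\cap W_{\xx,\tt}(\tt)^r$, and by \cref{def:holonomicity-crit2} that quotient is holonomic. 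The loop therefore stops. Each iteration is a finite Gröbner basis computation in a finite-rank free module, so the iterations themselves also terminate.

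The main obstacle is the bookkeeping in step 2, namely checking that left multiplication by elements of $W_{\xx,\tt}(\tt)[T]$ keeps the $T$-degree bounded, so that a fixed finite truncation suffices. The hypothesis on $f$ is only needed to identify the saturation with $\cl_\xx(S)$ through \cref{cor:singlocus2}; finite rank of $S$ is what gives holonomicity of the target module.
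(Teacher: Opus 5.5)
Your proposal is correct and follows essentially the same route as the paper. Holonomicity of $\cl_\xx(S)$ comes from \cref{lemma:equivhol}; \cref{lem:sat} then shows that finitely many target elements lie in the truncated module once $s$ is large (your witnesses $L_{A,i}$, where the paper uses a finite Gröbner basis of the closure); and correctness follows from \cref{lem:sat} and \cref{cor:singlocus2}.

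One small slip concerns your explicit bound on $s$. Since $T\partial_\ell=\partial_\ell T+f_\ell T^2$, the degree $\deg_T(T^\ell g)$ can exceed $\ell+\deg_T g$ when $g\in\gens(S)$ involves derivatives (e.g.\ $T\partial_x e_1=\partial_x T e_1+f_x T^2 e_1$), so $s\ge \max N_{A,i}+\max_{g\in H}\deg_T g$ is not sufficient in general. Taking $s$ at least the maximum of $\deg_T(T^\ell g)$ over the finitely many pairs $(\ell,g)$ you use fixes it.
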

\begin{proof}
First, note that $\cl_\xx(S)$ coincides with $S : (f)^\infty$ by \cref{thm:singlocus2}, and that it admits a finite Gröbner basis.
By \cref{lem:sat}, this Gröbner basis can be obtained from a finite set of generators of the $W_{\xx,\tt}(\tt)$-module $W_{\xx,\tt}(\tt)[T]S + U_f$.
For $s$ sufficiently large, all these generators will have been taken into account on line~\ref{line:addG}.
Since the closure is holonomic by \cref{lemma:equivhol}, the algorithm must terminate at or before this iteration.
Correctness follows from \cref{lemma:equivhol}, \cref{thm:singlocus2}, and \cref{lem:sat}.\end{proof}

\begin{remark}
    Let $(G_s)_{s\in\bN}$ be the sequence of  Gröbner bases computed 
    on line 5 of~\cref{algo:pwc} and let $(G'_s)_{s\in\bN}$ be the sequence defined by $G'_s = G_s \cap W_{\xx,\tt}(\tt)^r$.
    Two other possible termination criteria for~\cref{algo:pwc} are:
    \begin{itemize}
        \item stopping at the first $s'$ such that $G'_{s'}= G'_{s'+1}$ and \\
        $W_{\xx,\tt}(\tt)^r / W_{\xx,\tt}(\tt)G'_{s'}$ is holonomic,  
        \item stopping at the first $s' + h$ for which $W_{\xx,\tt}(\tt) / W_{\xx,\tt}(\tt)G'_{s'}$ is holonomic, 
        where $h \in \mathbb{N}$ is a prescribed integer.  
    \end{itemize}
    At present, however, the author does not know any criterion that guarantees the output coincides with $\cl_\xx(S)$ without the use of $b$-functions.
\end{remark}

\section{Timings}\label{sec:appl-timings-wc}

This section presents benchmarks for a proof-of-concept implementation~\cite{brochetJuliaPackageMultivariateCreativeTelescoping2025} 
of \cref{algo:pwc}.
The results are compared with two existing implementations of Tsai's algorithm.
The first one is part of the \texttt{Plural} package~\cite{levandovskyyPluralComputerAlgebra2003} of \textsc{Singular}~\cite{deckerSingular440Computer2024}, 
and the second one is included in the \texttt{Dmodules} package~\cite{leykinDmodulesMacaulay2Package} of \textsc{Macaulay2}~\cite{graysonMacaulay2SoftwareSystem}.

Comparing different algorithms implemented by different authors in different programming languages is always delicate. 
This comparison is particularly delicate as both approaches rely on Gröbner basis computations, 
whose performance is known to depend heavily on the underlying implementation.
For this reason, benchmarks for the Gröbner basis implementations used in these three systems are first presented.

All benchmarks were performed on a cluster, where each node is equipped with two Intel(R) Xeon(R) E5-2660 v2 CPUs and 256 GB of RAM. 
Each CPU has 10 cores with hyper-threading (20 threads). 
All computations were run on a single thread when the cluster was not under heavy load, 
although the workload manager may have scheduled multiple threads on the same CPU, making the effective memory available per thread difficult to estimate.
Moreover, small run-to-run variations in execution time were regularly observed, so minor differences in reported timings should not be considered significant.

\subsection{Gröbner bases}\label{sec:gb-timings}
In \texttt{Plural}, Gröbner bases are computed using slimgb~\cite{brickensteinSlimgbGrobnerBases2010}, 
an algorithm specifically designed to limit intermediate expression growth.
\textsc{Macaulay2}'s \texttt{Dmodules} package uses the standard Buchberger algorithm, invoked through the “gbw” command.
To the best of the author's knowledge, users may specify only a weight vector, but not a general monomial order.
The \textsc{Julia} implementation relies on the F4 algorithm~\cite{faugereNewEfficientAlgorithm1999} and uses a tracer~\cite{traversoGrobnerTraceAlgorithms1989} 
to remember the S-pairs reducing to zero in subsequent calls. 
When working over a finite field, the timing of the first call (learn) and of a subsequent call (apply) is given.
All three implementations are tested over the finite field $\bF_{536870909}$ as well as over $\bQ$.

\paragraph{Gröbner bases over a finite field}
The performance of the three implementations are compared in~\cref{tab:gbmodp} over the field $\bF_{536870909}$ on the following set of examples.
8reg corresponds to the $W_\xx$-module $S$ described in~\cite[Theorem 33]{brochetFasterMultivariateIntegration2025} where $t$ was evaluated at $42$.
Beukers corresponds to the computation of the annihilator of $e^{f}$ where 
\begin{equation}
f= x_4^6 -(x_4^2-x_1x_2)x_3x_4^3 -4049x_1x_2x_3(x_4-x_1)(x_4-x_2)(x_4-x_3).
\end{equation}
This polynomial evaluated at $x_4=1$ appeared in the work of Beukers~\cite{beukersIrrationalityP2Periods1983} and is related to the Apéry numbers.
Exp1 corresponds to the computation of the annihilator of $e^{g}$ where 
\begin{equation}
g=(x_1^2 + x_2^2 + x_3^2)(x_1^4+x_2^4+x_3^4).
\end{equation}
For the above examples, the label (gl) in~\cref{tab:gbmodp} indicates that the order 
\begin{equation}
\grevlex(x_1,\dots,x_\ell,\partial_1,\dots,\partial_\ell)
\end{equation} was 
used in \textsc{Julia} and \textsc{Singular}, while the weight vector $(\mathbf{1},\mathbf{1})$ was chosen in \textsc{Macaulay2}. 
Similarly, the label (elim) indicates that the block order 
\begin{equation}
\lex(x_1,\dots,x_\ell) > \lex(\partial_1,\dots,\partial_\ell)
\end{equation} was used
in \textsc{Julia} and \textsc{Singular}, while the weight vector $(\bm{1},\bm{0})$ was chosen in \textsc{Macaulay2}. 
The example ha2 corresponds to the first Gröbner basis computation in Oaku's algorithm for computing the $b$-function~\cite{oakuAlgorithmComputingBfunctions1997} 
on input the polynomial
\begin{gather}
x_1x_2x_3x_4(x_1 + x_2)(x_1 + x_3)(x_1 + x_4)
\end{gather}
with the order 
\begin{equation}
\text{grevlex}(u,v) > \text{grevlex}(s,t,x_1,\dots,x_\ell,\partial_t,\partial_1,\dots,\partial_\ell).
\end{equation}
This last example is taken from~\cite{levandovskyyComputationalDmoduleTheory2008}.
A star (*) after the timing of a computation with \textsc{Macaulay2} indicates that the returned Gröbner 
basis is different than the other two due to the use of a different monomial order.
MEM indicates that the computation ran out of memory.

\begin{table}

\centering 
\caption{Comparison of three Gröbner basis implementations over the field $\bF_{536870909}$.}
\label{tab:gbmodp}
\begin{tabular}{c c c c c}
\toprule
 & \multicolumn{2}{c}{Julia} & \multicolumn{1}{c}{Macaulay2} & \multicolumn{1}{c}{Singular} \\
\cmidrule(r){2-3} \cmidrule(r){4-4} \cmidrule(r){5-5}
Ideal & learn & apply &  &  \\
\midrule
8reg (elim)
& {\color{mygreen}<1s}
& {\color{mygreen}<1s}
& {\color{myyellow}1s}
& {\color{myyellow}4s} \\

8reg (gl)
& {\color{mygreen}45s}
& {\color{mygreen}4s}
& {\color{myyellow}1h 38m 10s}
& {\color{myyellow}5m 8s} \\

Beukers (elim)
& {\color{mygreen}8s}
& {\color{mygreen}5s}
& {\color{myyellow}20h 26m 49s*}
& {\color{myyellow}31s} \\

Beukers (gl)
& {\color{mygreen}105s}
& {\color{mygreen}30s}
& {\color{myyellow}1h 38m 15s}
& {\color{myyellow}31m 27s} \\

Exp1 (elim)
& {\color{myyellow}3s}
& {\color{myyellow}2s}
& {\color{mygreen}<1s}
& {\color{myyellow}10s} \\

Exp1 (gl)
& {\color{mygreen}<1s}
& {\color{mygreen}<1s}
& {\color{mygreen}<1s*}
& {\color{mygreen}<1s} \\

ha2
& {\color{mygreen}35m 52s}
& {\color{mygreen}31m 53s}
& {\color{myyellow}1h 49m 35s*}
& {\color{myyellow}1h 43m 56s} \\
\bottomrule
\end{tabular}

\end{table}

As shown in~\cref{tab:gbmodp}, the \textsc{Julia} implementation performs significantly better, 
even without the use of a tracer, 
than the implementations in \textsc{Macaulay2} and \textsc{Singular}. 
Moreover, for computations lasting more than a few seconds, 
\textsc{Singular}'s \texttt{slimgb} algorithm substantially outperforms the implementation in \textsc{Macaulay2}. 
The large discrepancy observed for Beukers (elim) can be explained by the fact that 
\textsc{Macaulay2} computes a Gröbner basis that differs from the one produced by the other two systems.

\paragraph{Gröbner bases over $\bQ$}
The same set of examples as in the previous paragraph is used in~\cref{tab:gbQ} to compare the three implementations over $\bQ$.
\textsc{Singular} and \textsc{Macaulay2} perform the computation directly over the rationals,
 while the \textsc{Julia} implementation proceeds modulo several primes and then reconstructs the result in $\bQ$.
The \textsc{Julia} implementation is probabilistic, since no effective bounds are known 
on the size of the coefficients of the output. 
Consequently, a probabilistic reconstruction criterion is used: 
the algorithm terminates when two successive reconstructions succeed 
and yield identical results. 
This criterion is standard in computer algebra and is employed in many software packages.
Somewhat surprisingly, the modular method performs worse than the direct computation, 
even though the \textsc{Julia} implementation outperforms the others over $\bF_{536870909}$.
The efficient computation of Gröbner bases in Weyl algebras would benefit from further investigation,
which is beyond the scope of this article.
Note, however, that Gröbner basis computation is typically only a means to an end, and the reconstruction step can often be postponed or avoided altogether.

\begin{table}
\centering 
\caption{Comparison of three Gröbner basis implementations over the field $\bQ$.}
\label{tab:gbQ}

\begin{tabular}{c c c c}
\toprule
 & Julia & Macaulay2 & Singular \\
\midrule
8reg (elim) &
{\color{mygreen}1s} &
{\color{mygreen}1s} &
{\color{myyellow}21s} \\

8reg (gl) &
{\color{myyellow}22m 44s} &
{\color{myyellow}1h 45m 57s} &
{\color{mygreen}14m 43s} \\

Beukers (elim) &
{\color{myyellow}2m 20s} &
{\color{myred}MEM} &
{\color{mygreen}39s} \\

Beukers (gl) &
{\color{myyellow}2h 59m 53s} &
{\color{myyellow}2h 19m 57s} &
{\color{mygreen}1h 29m 33s} \\

Exp1 (elim) &
{\color{myyellow}12s} &
{\color{mygreen}<1s*} &
{\color{myyellow}12s} \\

Exp1 (gl) &
{\color{mygreen}<1s} &
{\color{mygreen}<1s} &
{\color{mygreen}<1s} \\

ha2 &
{\color{myyellow}5h 7m 13s} &
{\color{mygreen}1h 48m 32s*} &
{\color{myyellow}1h 48m 42s} \\
\bottomrule
\end{tabular}
\end{table}

\subsection{Weyl closure}
The three implementations of the Weyl closure are now compared in~\cref{tab:wc}.
Recall that~\cref{algo:pwc} only computes an approximation of the Weyl closure, while Tsai's algorithm is exact. 
Accordingly, we report the size of the reduced Gröbner basis obtained at the end of the computation, 
except for the \textsc{Macaulay2} implementation, where a different monomial order is used.
The benchmark set consists of computing the Weyl closure of the annihilating ideals of the following functions, 
as returned by the \textsc{Mgfun} package~\cite{chyzakMgfunMaplePackage}.
\begin{equation}
\begin{aligned}
\makebox[\linewidth][l]{$f_1(a,x) =
x\,\J_1(ax)\,\I_1(ax)\,\Y_0(x)\,\K_0(x)
$}
\\
\makebox[\linewidth][r]{\text{grevlex}$(a,x,\partial_a,\partial_x)$}
\end{aligned}
\label{eq:4Bessels}
\end{equation}

\begin{equation}
\begin{aligned}
\makebox[\linewidth][l]{$f_2(x,y,z,t)=
\frac{1}{1-(1-xy)z - txyz(1-x)(1-y)(1-z)}
$}
\\
\makebox[\linewidth][r]{\text{grevlex}$(t,x,y,z,\partial_t,\partial_x,\partial_y,\partial_z)$}
\end{aligned}
\label{eq:beukers-tim}
\end{equation}

\begin{equation}
\begin{aligned}
\makebox[\linewidth][l]{$f_3(t,x,y)=
\J_1(x)\,\J_1(y)\,\J_2\!\bigl(t\sqrt{xy}\bigr)\,e^{-x-y}
$}
\\
\makebox[\linewidth][r]{\text{grevlex}$(t,x,y,\partial_t,\partial_x,\partial_y)$}
\end{aligned}
\label{eq:doubleBessel}
\end{equation}

\begin{equation}
\begin{aligned}
\makebox[\linewidth][l]{$f_4(t,x,y)=
\J_1(x)\,\J_1(y)\,\J_2\!\bigl(t\sqrt{xy}\bigr)\,e^{-x-y}\exp\left(\frac{1}{xyt}\right)
$}
\\
\makebox[\linewidth][r]{\text{grevlex}$(t,x,y,\partial_t,\partial_x,\partial_y)$}
\end{aligned}
\label{eq:doubleBessel_exp}
\end{equation}

\begin{equation}
\begin{aligned}
\makebox[\linewidth][l]{$f_5(t,x,y)=
\J_1(x)\,\J_1(y)\,\J_2\!\bigl(t\sqrt{xy}\bigr)e^{-x-y}\log(xyt)
$}
\\
\makebox[\linewidth][r]{\text{grevlex}$(t,x,y,\partial_t,\partial_x,\partial_y)$}
\end{aligned}
\label{eq:doubleBessel_log}
\end{equation}

\begin{equation}
\begin{aligned}
\makebox[\linewidth][l]{$f_6(t,x,y)=
\frac{1}{(y^2+1)(x^2+1)\,t - xy}
$}
\\
\makebox[\linewidth][r]{\text{grevlex}$(t,x,y,\partial_t,\partial_x,\partial_y)$}
\end{aligned}
\label{eq:ssw2}
\end{equation}

\begin{equation}
\begin{aligned}
\makebox[\linewidth][l]{$f_7(t,x,y)=
\frac{\exp\left(\frac{1}{(y^2+1)(x^2+1)\,t - xy}\right)}{(y^2+1)(x^2+1)\,t - xy}
$}
\\
\makebox[\linewidth][r]{\text{grevlex}$(t,x,y,\partial_t,\partial_x,\partial_y)$}
\end{aligned}
\label{eq:ssw2_exp}
\end{equation}

\begin{equation}
\begin{aligned}
\makebox[\linewidth][l]{$f_8(t,x,y)=
\frac{\log((y^2+1)(x^2+1)\,t - xy)}{(y^2+1)(x^2+1)\,t - xy}
$}
\\
\makebox[\linewidth][r]{\text{grevlex}$(t,x,y,\partial_t,\partial_x,\partial_y)$}
\end{aligned}
\label{eq:ssw2_log}
\end{equation}

\begin{equation}
\begin{aligned}
\makebox[\linewidth][l]{$f_9(x,u)=
\Ai \bigl(2^{2/3}(x^2+u)\bigr)
$}
\\
\makebox[\linewidth][r]{\text{grevlex}$(x,u,\partial_x,\partial_u)$}
\end{aligned}
\label{eq:airy}
\end{equation}

\begin{equation}
\begin{aligned}
\makebox[\linewidth][l]{$f_{10}(t,x,y,z,w)=
\frac{1}{q(x,y,z,w)t - xyzw}
$}
\\
\makebox[\linewidth][r]{\text{grevlex}$(t,x,y,z,w,\partial_t,\partial_x,\partial_y,\partial_z,\partial_w)$}
\end{aligned}
\label{eq:CY}
\end{equation}

where in the last equation 
\begin{equation*}
       \begin{aligned}
        q ={}& x^2y^2z^2w^2 + x^2y^2zw^2 + x^2yz^2w^2 + xy^2zw^2 + xy^2w^2 + xyzw^2 \\
        &+ xyz + xz^2 + xy + xz + y + z .
    \end{aligned}
\end{equation*}
The symbols  $\J, \I,\Y,\K$ denote the usual Bessel functions and $\Ai$ denotes the Airy function of the first kind.
\cref{eq:4Bessels,eq:doubleBessel} were taken from~\cite{chyzakABCCreativeTelescoping2014},
\cref{eq:beukers-tim} appeared in~\cite{beukersIrrationalityP2Periods1983},
\cref{eq:ssw2} belongs to a family of integrals related to the enumeration of small step walks in the quarter plane~\cite{bostanHypergeometricExpressionsGenerating2017} (with a polynomial factor in the numerator removed),
 \cref{eq:airy} is from the database~\cite{koutschanHolonomicFunctionsMathematica2014}, 
 and~\cref{eq:CY} is the period of a Calabi--Yau 3-fold appearing in the database
 associated with the article~\cite{batyrevConstructingNewCalabiYau2010} (Topology \#42, polytope v25.59).
The functions \cref{eq:doubleBessel_exp,eq:doubleBessel_log,eq:ssw2_log,eq:ssw2_exp}
are obtained from \cref{eq:doubleBessel,eq:ssw2} by artificially increasing
the severity of their singularities.
This is achieved by
multiplying the original functions by a factor of the form
$\exp(1/u)$ or $\log(u)$, where $u$ is a polynomial vanishing on their
corresponding singular locus.
Note that this benchmark table does not include any case of partial Weyl closure since there are no algorithms to compare against.
\begin{table}
\centering 
\caption{Comparison of three algorithms for the Weyl closure. 
The column time indicates the computation time and the column size the number of operators in the output Gröbner basis.}
\label{tab:wc}
\begin{tabular}{c c c c c c}
\toprule 
& \multicolumn{2}{c}{Julia} & \multicolumn{1}{c}{Macaulay2} & \multicolumn{2}{c}{Singular} \\ 
\cmidrule(r){2-3} \cmidrule(r){4-4} \cmidrule(r){5-6}
  & time & size & time & time & size \\  

\cref{eq:4Bessels} &
\textcolor{myyellow}{198} & 19 &
\textcolor{myyellow}{13.9} &
\textcolor{mygreen}{8} & 32 \\

\cref{eq:beukers-tim} &
\textcolor{mygreen}{0.9} & 37 &
\textcolor{myred}{MEM} &
\textcolor{myred}{MEM} & - \\

\cref{eq:doubleBessel} &
\textcolor{mygreen}{0.4} & 8 &
\textcolor{myyellow}{53.5} &
\textcolor{myyellow}{440} & 13 \\

\cref{eq:doubleBessel_exp} &
\textcolor{mygreen}{82} & 84 &
\textcolor{myred}{MEM} &
\textcolor{myred}{MEM} & - \\

\cref{eq:doubleBessel_log} &
\textcolor{myred}{>48h} & - &
\textcolor{myred}{MEM} &
\textcolor{myred}{MEM} & - \\

\cref{eq:ssw2} &
\textcolor{mygreen}{0.05} & 13 &
\textcolor{myred}{MEM} &
\textcolor{myyellow}{3} & 13 \\

\cref{eq:ssw2_exp} &
\textcolor{mygreen}{2} & 28 &
\textcolor{myred}{MEM} &
\textcolor{myyellow}{34} & 28 \\

\cref{eq:ssw2_log} &
\textcolor{mygreen}{58} & 28 &
\textcolor{myyellow}{2173.5} &
\textcolor{myred}{MEM} & - \\

\cref{eq:airy} &
\textcolor{mygreen}{0.03} & 4 &
\textcolor{myyellow}{0.2} &
\textcolor{mygreen}{<1} & 4 \\

\cref{eq:CY} &
\textcolor{myred}{>48h} & - &
\textcolor{myred}{MEM} &
\textcolor{myred}{MEM} & - \\

\end{tabular}

\end{table}

We observe that \textsc{Julia} outperforms the other two algorithms on all examples except \cref{eq:4Bessels}.
However, in both examples involving Bessel functions, \textsc{Julia} does not compute the full Weyl closure.
Note that the modular method, which does not appear to be very fruitful for Gröbner bases over $\bQ$, 
becomes more relevant in this setting, since operators in the localization $W_{\xx}[1/p]$ typically have much larger coefficients than those in the closure $\cl(S)$.

Lastly, the holonomic annihilator of rational functions can be obtained by dedicated algorithms~\cite{ oakuAlgorithmComputingBfunctions1997, brianconRemarquesLidealBernstein2002, saitoGrobnerDeformationsHypergeometric2000,andresPrincipalIntersectionBernsteinsato2009}. 
For example, the annihilator of the integrand in~\cref{eq:beukers-tim} can be obtained in roughly 30 seconds in \textsc{Singular}.

\bibliographystyle{ACM-Reference-Format}
\bibliography{biblio}

\end{document}